\documentclass[conference,compsoc]{IEEEtran}
\ifCLASSOPTIONcompsoc
  \usepackage[nocompress]{cite}
\else
  \usepackage{cite}
\fi
\ifCLASSINFOpdf
  \usepackage[pdftex]{graphicx}
  \graphicspath{{./figures/}}
  \DeclareGraphicsExtensions{.pdf}
\else
\fi
\usepackage{array}
\newcolumntype{?}{!{\vrule width 1.3pt}}

\usepackage{url}

\usepackage[dvipsnames]{xcolor}
\usepackage{xurl}
\usepackage{booktabs}
\usepackage{multirow}
\usepackage{xspace}
\usepackage{makecell}
\usepackage{tablefootnote}
\usepackage{pifont}

\newenvironment{packeditemize}{
\begin{list}{$\bullet$}{
\setlength{\itemsep}{1.5pt}
\setlength{\labelwidth}{8pt}
\setlength{\leftmargin}{10pt}
\setlength{\labelsep}{3pt}
\setlength{\listparindent}{\parindent}
\setlength{\parsep}{1.5pt}
\setlength{\parskip}{1.5pt}
\setlength{\topsep}{1.5pt}}}{\end{list}}

\newcommand{\para}[1]{\noindent{\bf #1}}

\usepackage{algorithm}
\usepackage{algpseudocode}

\usepackage[normalem]{ulem}

\usepackage{amsthm}
\usepackage{amsmath}
\usepackage{amssymb}
\newtheorem{theorem}{Theorem}
\newtheorem{definition}{Definition}
\newtheorem{lemma}{Lemma}

\usepackage[hidelinks]{hyperref}
\hypersetup{colorlinks=false}

\begin{document}
\emergencystretch 30pt
%
\title{On the Effectiveness of NetFlow-Based Deanonymization of Tor Onion Services (Extended Version)}

\author{\IEEEauthorblockN{Jinseo Lee}
\IEEEauthorblockA{KAIST\\
jinseo.vik.lee@kaist.ac.kr}
\and
\IEEEauthorblockN{Jongkook Han}
\IEEEauthorblockA{KAIST\\
jkookhan03@kaist.ac.kr}
\and
\IEEEauthorblockN{Yixin Sun}
\IEEEauthorblockA{University of Virginia\\
ys3kz@virginia.edu}
\and
\IEEEauthorblockN{Min Suk Kang$^{\dagger}$}
\IEEEauthorblockA{KAIST\\
minsukk@kaist.ac.kr}}


%


\maketitle
\thispagestyle{plain}
\pagestyle{plain}

\makeatletter
    \begingroup
        \renewcommand{\thefootnote}{\fnsymbol{footnote}}
        \def\@IEEEcompsocmakefnmark{\hbox{\normalfont\@thefnmark\ }}
        \footnotetext[2]{Corresponding author.}
    \endgroup
\makeatother

\begin{abstract}
In the context of Tor, deanonymization attacks via traffic correlation have long been studied for both clients and onion services.
Complementing prior investigations that utilize fine-grained flow metadata such as packet sizes and timings, we study a relatively less explored yet easier-to-obtain adversarial capability: deanonymization using only coarse NetFlow-style telemetry, such as one-minute, 1-out-of-1,000 sampled packet counts.
We first revisit active NetFlow-based traffic analysis attacks against Tor onion services and show, through large-scale experiments with live Tor data, that coarse telemetry is sufficient to deanonymize a targeted onion service when the adversary can induce discernible traffic patterns through controlled downloads.
We then introduce a stronger active attack based on controlled introduction requests, which removes application-specific assumptions and gives adversaries more flexibility while achieving comparable success rates.
Finally, we present a passive deanonymization attack that sends zero traffic to target onion services.
The attack exploits Tor's puzzle-based DoS defense as a side channel: public puzzle difficulty reveals service-load changes that can be semantically matched against NetFlow traces.
Our findings show that coarse network telemetry poses a practical deanonymization threat to onion services today and call for proactive mitigations and a redesign of the globally visible defense signals that may leak service-load information.
\end{abstract}


%
\IEEEpeerreviewmaketitle

\section{Introduction}
\label{sec:intro}
Deanonymization is one of the most serious threats to the Tor ecosystem~\cite{murdoch05low,evans09practical,bauer07low,johnson13users,sun15raptor,nasr18deepcorr,oh22deepcoffea,overlier06locating,biryukov13trawling,jansen14sniper,lopes24flow,murdoch07sampled,chakravarty14effectiveness}.
Since its initial design~\cite{dingledine04tor}, Tor has accepted an inherent tradeoff between anonymity and latency: it provides practical low-latency communication but does not aim to protect against global adversaries capable of observing traffic at large scale.
This design choice rests on the assumption that collecting traffic metadata from sufficiently many network vantage points is prohibitively costly for most adversaries.

Against this backdrop, one relatively overlooked yet operationally realistic adversary capability
is access to coarser-grained network telemetry, such as minute-level sampled packet counts from NetFlow.
Such telemetry is routinely collected in operational networks for monitoring, traffic engineering, anomaly detection, and volumetric attack mitigation~\cite{steenbergen05advanced,berthier07netflow,boschi07from,filsfils11best,freedman15many}.
Although far noisier than packet-level traces, it may become a powerful deanonymization signal when collected at scale (e.g., by regional ISPs or a state-level adversary).

Prior work has shown that this threat is plausible.
In 2007, Murdoch and Zieli\'{n}ski~\cite{murdoch07sampled} suggested, through simulation results and analytic modeling, that IXPs could observe a significant portion of Tor traffic and thereby correlate two traffic traces: one entering the Tor network and another exiting it.
Later in 2014, Chakravarty et al.~\cite{chakravarty14effectiveness} demonstrated that NetFlow telemetry can deanonymize Tor clients in a decoy-service setting by inducing a recognizable traffic pattern at an adversary-controlled decoy server and searching for the same pattern near candidate clients.
Although this attack demonstrated the feasibility of NetFlow-based deanonymization, it relies on a specific assumption: the adversary must lure a target client to the decoy service and keep the target engaged long enough to induce a traffic signal.
The Tor Project also raised concerns about the attack's practicality and scalability, citing false positive rates and the base-rate fallacy when searching among many candidates~\cite{dingledine14traffic}.

To the best of our knowledge, there has been little academic follow-up or public discussion on the real-world impact of NetFlow-based deanonymization after Chakravarty et al.'s work and the Tor Project's response in 2014.
In this work, we therefore revisit the largely overlooked adversary capability, i.e., large-scale coarse-grained telemetry, but study it against Tor onion services rather than Tor clients.

This shift in interest, from Tor clients to onion services, matters because deanonymization of onion services poses a serious threat to the global anonymity-service landscape.
Onion services are often operated under severe surveillance, censorship, or legal risks~\cite{lopes24flow,maurizi26welcome,securedrop26directory}, and they protect content providers as well as content consumers.
Consequently, exposing an onion service can reveal the operator of a sensitive service, with potentially greater consequences than deanonymizing a Tor client.
Furthermore, onion services are more vulnerable to this specific class of attacks because anyone can initiate communication with them, as we show in this paper.

Building on existing knowledge and practice regarding the privacy of Tor onion services, we ask how effective NetFlow-based deanonymization attacks are against onion services.
We answer this question by studying attacks across a spectrum: from active traffic induction through normal service access and introduction requests to passive deanonymization that requires no traffic induction at all.
Our findings are threefold.
First, NetFlow-only adversaries can deanonymize onion services by inducing download traffic, without the social engineering (e.g., a decoy web service) required by prior client-side attacks.
Second, application-level downloads are unnecessary: introduction requests alone induce NetFlow-visible traffic patterns that do not depend on target-specific content and are comparably effective for deanonymization.
Third, deanonymization can even be passive: a new side channel in Tor's Denial-of-Service (DoS) mitigation protocol exposes coarse-grained DoS-related signals, which can be correlated with large-scale NetFlow telemetry to deanonymize onion services under DoS conditions.

Taken together, these findings show that coarse-grained operational telemetry, previously considered too noisy or limited for reliable deanonymization, constitutes a practical anonymity attack capability against Tor onion services.
They show that NetFlow-only adversaries are not limited to carefully staged decoy-service attacks against Tor clients, but can also conduct generic attacks that exploit in-protocol features against Tor onion services.

The root cause of these attacks is access to large-scale, coarse-grained telemetry.
However, restricting access to such telemetry is challenging: legal or regulatory limits may help in some jurisdictions, but they cannot provide a global defense.
We thus argue that technical mitigations should assume that large-scale NetFlow telemetry is accessible to sufficiently capable adversaries, and suggest several mitigation directions: coarse-grained traffic-signature detection, periodic IP rotation, and changes to Tor's DoS mitigation protocol to reduce passive side-channel leakage.

Our contributions are summarized as follows:
\begin{packeditemize}
    \item We revisit NetFlow-based deanonymization against a new target, Tor onion services, and show that onion services are more easily deanonymized than Tor clients studied in prior work,
    \item We design and evaluate two active NetFlow-only attacks: download induction and introduction induction,
    \item We identify a new side channel in Tor's DoS mitigation protocol that enables even passive deanonymization,
    \item We empirically evaluate all the proposed attacks using ethically collected Tor traffic summaries and traces from attack executions involving our own servers, and
    \item We also evaluate several potential mitigation strategies and propose promising directions for mitigation.
\end{packeditemize}
\section{Background and Related Work}
\label{sec:background}

\subsection{Tor and Onion Services}
\label{sec:onion-service-background}
Tor~\cite{dingledine04tor} is an overlay network designed to provide anonymity for low-latency, TCP-based applications.
It consists of globally distributed relay servers operated by volunteers.
A Tor client constructs a path, or circuit, through multiple relays to reach the intended destination.
Each relay can identify only its immediate predecessor and successor and does not know the complete path.
A typical Tor circuit involves three relays: an entry (guard) node, a middle node, and an exit node.
When a client accesses the public Internet through this circuit, the traffic leaves the Tor network via the exit node and is forwarded to the destination server.
Because the destination server observes only the identity of the exit node, the client's identity remains hidden; i.e., sender anonymity is preserved.

The Tor onion service protocol~\cite{tor26rendezvous} extends anonymity protection to the server side, concealing both the client's identity and the server's location.
Unlike conventional Internet services that use standard domain names, onion services employ a unique addressing scheme that ends with \texttt{.onion}.
These services are hosted entirely within the Tor network, ensuring that all communication remains internal to Tor.
To establish a connection, an onion service selects a set of relays known as introduction points, through which clients can initiate contact, and publishes this information in its descriptor through directory servers.
The client then selects a relay called a rendezvous point, where the actual communication takes place, and sends an introduction request containing rendezvous information to the onion service via one of the introduction points.
By communicating through the rendezvous point without revealing the server's location, the onion service can provide its service while remaining hidden; i.e., receiver anonymity is preserved.

\newcommand{\yes}{\color{Green}\ding{51}}
\newcommand{\no}{\color{purple}\ding{56}}

\begin{table*}[tb]
    \setlength{\tabcolsep}{4pt}
    \renewcommand{\arraystretch}{1.12} 
    \caption{Qualitative Comparison of Threat Models for Traffic Analysis Attacks on Tor}
    \label{tab:threat-comp}
    \footnotesize
    \centering
    \begin{tabular}{l?c|c|c|c|c|c|c?c|c?c}
        \toprule
        & \textbf{\cite{murdoch05low,evans09practical}} & \textbf{\cite{bauer07low}} & \textbf{\cite{johnson13users}} & \textbf{\cite{sun15raptor}} & \textbf{\cite{nasr18deepcorr,oh22deepcoffea}} & \textbf{\cite{overlier06locating,zhang11application,biryukov13trawling,ling13protocol,jansen14sniper}} & \textbf{\cite{lopes24flow}} & \textbf{\cite{murdoch07sampled}} & \textbf{\cite{chakravarty14effectiveness}} & \textbf{Ours} \\
        \midrule
        \textbf{Relay Adversary} & {\yes} & {\yes} & {\yes} & {\no} & {\yes} & {\yes} & {\no} & {\no} & {\no} & {\no} \\
        \textbf{Network Adversary} & {\no} & {\no} & {\yes} & {\yes} & {\yes} & {\no} & {\yes} & {\yes} & {\yes} & {\yes} \\
        \textbf{Coarse Metadata (e.g., NetFlow) Only} & {\no} & {\no} & {\no} & {\no} & {\no} & {\no} & {\no} & {\yes} & {\yes} & {\yes} \\
        \textbf{Direct Interaction with Target} & {\yes} & {\no} & {\no} & {\no} & {\no} & {\yes} & {\no} & {\no} & {\yes} &  {\no}/{\yes} \\
        \textbf{Protocol Bug Exploited} & {\no} & {\yes} & {\no} & {\no} & {\no} & {\yes} & {\no} & {\no} & {\no} & {\no} \\
        \textbf{Target$^{*}$} & C & C & C & C & C & O & B & C & C & O \\
        \bottomrule
        \multicolumn{11}{l}{\footnotesize $^{*}$C denotes Tor clients, O denotes onion services, and B denotes both.}
    \end{tabular}
\end{table*}

\subsection{Traffic Analysis for Deanonymization}
\label{sec:deanon-background}
In low-latency anonymity networks, including Tor, traffic metadata (e.g., packet volumes and timings) is known to be difficult to conceal, making them vulnerable to traffic analysis attacks~\cite{serjantov03passive,danezis04traffic,levine04timing}.
Consequently, studies have proposed various traffic analysis attacks against Tor clients and onion services.

Against {\em Tor clients}, Murdoch and Danezis~\cite{murdoch05low} showed in 2005 that an adversary could estimate the traffic load on specific relays and use this information to identify Tor clients visiting adversarial destinations (e.g., an adversary-controlled decoy server); Evans et al.~\cite{evans09practical} later revisited and improved the attack in 2009.
In 2007, Bauer et al.~\cite{bauer07low} proposed a low-cost traffic analysis attack based on relay bandwidth inflation.
Johnson et al.~\cite{johnson13users} evaluated the practical risk posed by various traffic analysis adversaries in 2013, while Sun et al.~\cite{sun15raptor} investigated the impact of Internet routing on traffic analysis attacks in 2015.
Nasr et al.~\cite{nasr18deepcorr} and Oh et al.~\cite{oh22deepcoffea} proposed machine learning-based traffic analysis attacks in 2018 and 2022, respectively.

Against {\em Tor onion services}, {\O}verlier and Syverson~\cite{overlier06locating} demonstrated in 2006 that an adversary could deanonymize a Tor onion service by operating a single relay.
They showed that an adversary could determine whether her relay was part of the target's rendezvous circuit by analyzing traffic metadata (i.e., the timing and number of Tor cells), and eventually deanonymize the target onion service by repeatedly initiating rendezvous connections.
This attack led to the widespread adoption of entry guards~\cite{wright03defending},\footnote{With entry guards, each client or onion service selects guard relays from a limited set rather than from the entire relay set. As a result, repeatedly establishing new connections does not increase the probability of selecting an adversarial relay.} mitigating the risk of such repeated probing.
In 2011, Zhang et al.~\cite{zhang11application} showed that an adversary could deanonymize a Tor onion service by counting cells, assuming that the adversary controlled the target's entry guard.
Later in 2013, Biryukov et al.~\cite{biryukov13trawling} revealed flaws in the onion service directory protocol that enabled opportunistic deanonymization attacks based on traffic analysis.
That same year, Ling et al.~\cite{ling13protocol} demonstrated that an adversary could quickly determine whether her relay served as the first node in a target's rendezvous circuit.
A year later, Jansen et al.~\cite{jansen14sniper} showed how DoS attacks can be used to facilitate the deanonymization of onion services.
Specifically, they demonstrated that an adversary can position one of its relays as a guard node for a target onion service by identifying the target's current guard relays and launching DoS attacks against them.
By repeatedly disrupting the target's guards, the adversary can eventually force the target to select a new guard relay under the adversary's control.

More recently, Lopes et al.~\cite{lopes24flow} introduced machine learning-based traffic analysis attacks targeting onion services and their clients.
They showed that, by using dense and fine-grained traffic metadata, an adversary can (1) distinguish onion service traffic from client traffic, (2) differentiate clearnet client traffic from onion service client traffic, and (3) correlate onion service client traffic with its corresponding onion service traffic.

\subsection{{\em NetFlow}-Based Deanonymization Attacks}
\label{sec:netflow-based-background}
Most effective traffic analysis attacks against Tor we discuss above use the accurate packet-level traffic metadata (e.g., packet size, timing) of Tor flows to perform correlation-based deanonymization. 
Another possible, yet less explored, approach to deanonymization of Tor onion services is to use much coarser-grained traffic metadata, such as sampled and aggregated flow records.
A typical example of such coarse-grained traffic metadata is NetFlow data, which have been widely deployed for traffic engineering and network management since the early 2000s~\cite{steenbergen05advanced,berthier07netflow,boschi07from,filsfils11best,freedman15many}.
They are easier to obtain and analyze than the accurate traffic metadata because they are {\em sampled} (e.g., only 1 out of every 1,000 packets is recorded) and {\em aggregated} (e.g., only the sum of sampled packet features is recorded over fixed time intervals, such as one minute).
Thus, the amount of collected traffic metadata is several orders of magnitude smaller than that of accurate traffic metadata (e.g., millions of packet-level records can be reduced to a single integer). 

Murdoch and Zieli\'{n}ski~\cite{murdoch07sampled}, for the first time in the public literature, proposed a NetFlow-based traffic correlation attack against Tor clients in 2007.
Their work suggested that NetFlow-based deanonymization of Tor clients could be feasible, and their claim was supported by simulation results and analytical modeling.

Chakravarty et al.~\cite{chakravarty14effectiveness} later extended this earlier work by evaluating the attack on the live Tor network using actual NetFlow data collected from a router.
Not only did they evaluate the attack in a real-world setting, but they also proposed a practical attack strategy for inducing identifiable traffic flows on target Tor clients.
One limitation of their approach, however, is that it requires a complicated setup for traffic induction, namely, tricking a target client into visiting an adversary-controlled decoy website and downloading a large file for at least 5--7 minutes.

Following this line of work, the Tor Project officially acknowledged the basic feasibility of these NetFlow-based attacks~\cite{murdoch07sampled,chakravarty14effectiveness} in its 2014 official statement~\cite{dingledine14traffic}.
Yet, in the same statement, the Tor Project cited the reported false positive rates of 6.6\% (6 false positives in 90 trials, each involving around 900 candidate flows) and 12.5\% (3 in 24 trials, each involving around 1,500 candidate flows) as evidence that the attack would {\em not} scale well to real-world deployments, where the number of candidate flows can reach millions; see the official rebuttal for further details~\cite{dingledine14traffic}.

Table~\ref{tab:threat-comp} summarizes the qualitative comparison of threat models for fine-grained traffic analysis attacks~\cite{murdoch05low,evans09practical,bauer07low,johnson13users,sun15raptor,nasr18deepcorr,oh22deepcoffea,overlier06locating,zhang11application,biryukov13trawling,ling13protocol,jansen14sniper,lopes24flow} and NetFlow-based deanonymization attacks~\cite{murdoch07sampled,chakravarty14effectiveness} on Tor.
In this study, we consider a network adversary that has access only to coarse-grained traffic metadata and targets Tor onion services either with or without direct interaction with the targets (i.e., through active or passive attacks).
\section{Problem Definition}
\label{sec:problem-definition}
In this section, we define the threat model and problem setting for NetFlow-based deanonymization of Tor onion services.
We begin by describing the adversary's objectives and capabilities and examine the practicality of the required adversarial capabilities.
Then, we formulate the research questions that guide our exploration of active and passive deanonymization attacks against onion services.

\subsection{Threat Model}
\label{sec:threat-model}
We consider an adversary whose goal is to {\em deanonymize} Tor onion services by associating onion addresses of interest with their corresponding IP addresses.
The adversary may target either a single {\em Onion of Interest} (OoI) or a collection of OoIs (e.g., the top-N most popular services within a specific sector, such as illegal pharmaceutical markets).
In addition to a set of OoIs, the adversary maintains a set of {\em candidate IP addresses}, consisting of IP addresses of hosts that communicate with Tor relays using the Tor protocol but are not themselves listed as Tor relays.

Examples of such adversaries can be easily found in practice.
Several governments have conducted operations aimed at deanonymizing onion services, including those of Germany~\cite{marx24law}, South Korea~\cite{wi22fighting}, and the US~\cite{doj14more,tippe24onion}.
Notably, some of these operations targeted more than 400 onion services, highlighting the motivation for large-scale deanonymization attacks.

We consider a network adversary (e.g., controlling one or more ISPs) whose network carries traffic for a substantial portion of the IP network space, as would be the case for a large tier-1 ISP or medium-sized regional ISP.
The network adversary is assumed to be equipped with industry-standard monitoring capabilities that are widely deployed in commercial routers, namely the NetFlow~\cite{cisco26netflow} flow monitoring feature and the ISP's existing centralized network monitoring infrastructure (e.g., the flow collectors and analysis tools typically used in network operations centers).
From this vantage position, the adversary can passively monitor a significant number of IP addresses at a coarse-grained level.
In addition, they can identify the ones that use the Tor Protocol by observing traffic from/to many Tor relays over time.

Additionally, for active attacks, we consider that the adversary can use her own Tor clients to induce specific, coarse-grained traffic patterns through either application-level interactions or introduction requests (see \S\ref{sec:traffic-inducing} to better understand how these two differ).

\subsection{Feasibility of Large-Scale NetFlow Monitoring}
\label{sec:large-scale-arguments}
One critical adversary capability we require in this study is to obtain {\em sampled and aggregated NetFlow telemetry} from a medium- to large-sized ISP network(s).
In this section, we discuss the feasibility of such large-scale NetFlow monitoring using real-world data and examples.
We first explain how an ISP can compile the list of candidate IP addresses.
Then, we assess the practicality of collecting NetFlow telemetry for these candidate IP addresses at scale.

\subsubsection{Large-scale identification of candidate IPs}
\label{sec:large-scale-ip}
The adversary must first identify candidate IP addresses, i.e., non-relay Tor-speaking IP addresses that are observable from the monitored network(s).
This task can be carried out using infrastructure already available to ISPs, as most Tor relay IPs and ports are publicly known~\cite{austin26tor,nusenu26tor,scriptz26tor}.
An adversary could, for instance, identify candidate IPs by analyzing historical network data (e.g., firewall logs) for connections to Tor guard relays.
Or, the adversary could configure sampled NetFlow monitoring for all guard relays to determine which candidate IPs are observable.

In this work, we consider a candidate IP address to be observable if packets in at least one direction between the IP address and one of its guard relays consistently transit the adversary network.
For example, a candidate IP address is considered observable if packets sent either to or from a single guard relay pass through the adversary network.
Our evaluation confirms that such partial traffic visibility is sufficient for effective deanonymization.
We attribute the limited impact of partial visibility to the fact that our attacks rely on \textit{changes} in traffic patterns rather than on absolute traffic volumes; see the results in \S\ref{sec:download-inducing}, \S\ref{sec:intro-inducing}, and \S\ref{sec:passive-eval}.

Because candidate IPs may include both Tor clients and onion services, we consider a simple filtering technique that removes IP addresses exhibiting only short-lived, sporadic Tor activity to focus on well-established, active onion services~\cite{mani18understanding}.
Nevertheless, the resulting candidate-IP list may still include some Tor clients.
We show, using real-world Tor traffic data we collect, that the presence of these clients does not diminish attack effectiveness; see the results in \S\ref{sec:download-inducing}, \S\ref{sec:intro-inducing}, and \S\ref{sec:passive-eval}.

\subsubsection{Large-scale network telemetry monitoring}
Once the adversary has identified the candidate IP addresses, the adversary monitors the traffic volumes to/from all observable candidate IPs.
This can be achieved using {\em sampled NetFlow}~\cite{cisco26netflow} (or similar techniques, such as sFlow~\cite{sflow26about} or IPFIX~\cite{aitken13specification}), which has been available as a basic feature in commodity routers since as early as 1996~\cite{hofstede14flow}.
When commodity routers are configured to use sampled NetFlow (typically {from 1:100 to 1:5000 sampling~\cite{popa23what,petryschuk24netflow,cloudflare25recommended,fastnetmon25fastnetmon}}) for a certain set of IPs, they send only summaries of the highly sampled packets to a centralized network monitoring system.
The default (and recommended) reporting interval for these summaries is one minute~\cite{petryschuk24netflow,juniper24netflow}, which we confirm to be sufficient for our attacks.

To gauge the practical overhead of data collection in a large-scale deanonymization campaign, we derive ballpark estimates of the number of candidate IPs that an adversary needs to monitor.
Using these estimates, we evaluate whether commodity routers can support such large-scale monitoring in terms of configuration complexity, storage capacity, monitoring bandwidth, and router CPU/memory usage.
We consider an adversary operating at a major ISP (e.g., tier-1 or large tier-2 ISPs) seeking to deanonymize a substantial number of onion services.

\vspace{3pt}
\para{Number of candidate IPs.}
First, we estimate the observability of candidate IP addresses at a major ISP.
Because we lack such data directly measured from ISPs, we instead use AS-level BGP simulations with the CAIDA AS relationships dataset~\cite{luckie13as,caida25as}.

\begin{figure}[t]
    \centering
    \includegraphics[width=\linewidth]{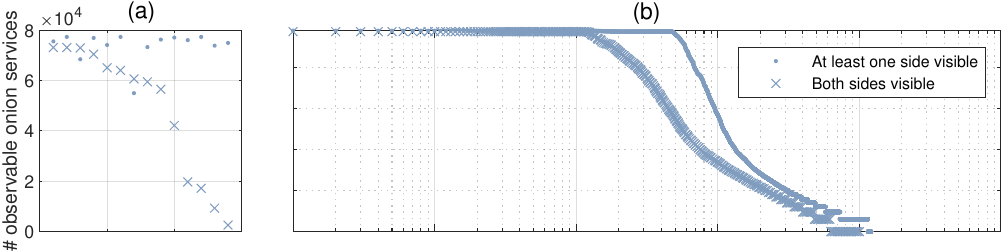}
    \caption{Number of observable onion service hosts for (a) tier-1 ISPs and (b) tier-2 ISPs, each showing two cases: both ingress and egress paths visible, and only one side visible. Note that (b) is shown on a log-log scale.}
    \label{fig:bgp}
\end{figure}

\begin{table*}[th]
    \setlength{\tabcolsep}{4pt}
    \renewcommand{\arraystretch}{1.20} 
    \caption{Router Resource Utilization and Monitoring Accuracy on a Juniper MX204$^{*}$}
    \label{tab:router}
    \footnotesize 
    \centering
    \begin{tabular}{lcccccc}
        \toprule
        \textbf{Number of Monitored IPs} & \textbf{0 (Control)} & \textbf{Top-10} & \textbf{Top-100} & \textbf{Top-1,000} & \textbf{Top-10,000} & \textbf{Top-100,000} \\
        \midrule
        \textbf{Packet Drops} & 0 & 0 & 0 & 0 & 0 & 0 \\
        \textbf{J-Flow Errors} & 0 & 0 & 0 & 0 & 0 & 0 \\
        \textbf{Memory Overload} & 0 & 0 & 0 & 0 & 0 & 0 \\
        \textbf{IPFIX Packets} & - & 35,285.8 & 101,584.3 & 160,329.3 & 246,469.8 & 309,003 \\
        \textbf{J-Flow Records} & - & 118,115 & 360,962 & 586,435.3 & 927,415.8 & 1,175,091.3 \\
        \textbf{J-Flow Total Relative Error (\%)} & - & 0.00 & -0.05 & -0.05 & 0.00 & -0.04 \\
        \textbf{FPC CPU Utilization (\%)$^{**}$} & (7.4, 8, 8.8) & (7.4, 8, 8.5) & (7.4, 8, 8.3) & (7.4, 8, 8.7) & (7.4, 8, 8.5) & (7.4, 8, 8.5) \\
        \textbf{RE CPU Utilization (\%)$^{**}$} & (12.5, 13, 16.8) & (12.5, 13, 17) & (12.4, 13, 16.8) & (12.6, 13, 17.3) & (12.4, 13, 17) & (12.4, 13, 16.3) \\
        \textbf{FPC Heap Utilization (\%)$^{**}$} & (9, 9, 9) & (9, 9, 9) & (9, 9, 9) & (9, 9, 9) & (9, 9, 9) & (12, 12, 12) \\
        \textbf{FPC Buffer Utilization (\%)$^{**}$} & (9, 9, 9) & (9, 9, 9) & (9, 9, 9) & (9, 9, 9) & (9, 9, 9) & (9, 9, 9) \\
        \textbf{RE Buffer Utilization (\%)$^{**}$} & (6.8, 7, 7) & (6.6, 7, 7) & (6.3, 6.8, 6.8) & (6.2, 6.3, 6.3) & (6.4, 6.5, 6.8) & (6.9, 7, 7) \\
        \bottomrule
        \multicolumn{7}{l}{\footnotesize $^{*}$All results are averages across four repeated experiments.\quad$^{**}$Each tuple reports mean, 95th percentile, and maximum.} \\
    \end{tabular}
\end{table*}

It has been previously studied that a large network adversary (e.g., an ISP or IXP) can observe or compromise a significant fraction of Tor traffic~\cite{johnson13users,nithyanand16measuring,rochet20claps,sun15raptor,sun17counter,wails18tempest,edman09as}, and AS-relationship-based BGP route simulation is widely used in these studies.
Building on this line of work, we estimate how many onion services would be visible to a large network adversary (i.e., how many have traffic between themselves and their guard nodes transiting through the adversary) using the CAIDA AS relationship dataset~\cite{luckie13as,caida25as} and the public Tor relay dataset~\cite{tor26onionoo}.
First, we generate an AS topology graph using the CAIDA dataset and classify ASes into three tiers: fourteen well-known large ASes\footnote{AS 701, 1299, 2914, 3257, 3320, 3356, 3491, 5511, 6453, 6461, 6762, 6830, 7018, and 12956} are designated as tier-1; ASes with no customers are labeled as stub ASes; all others are classified as tier-2.
Next, we identify ASes that host at least one Tor guard node (referred to as guard-ASes).
We assume that every AS hosts exactly one onion service, since the AS-level distribution of onion services is unknown.
Based on this assumption, we compute paths between each AS (representing an onion service) and all guard-ASes, and check whether these paths transit the adversary's AS.

When calculating paths, we simulate BGP route propagation under Gao-Rexford-style routing policies until convergence.
Specifically, routes learned from customers are exported to all neighboring ASes, whereas routes learned from peers or providers are exported only to customers, enforcing valley-free routing.
When multiple routes are available, an AS selects the route with the highest local preference, preferring customer routes over peer routes and peer routes over provider routes; remaining ties are broken by AS-path length and then by a deterministic hash of the next-hop ASN.
We repeat this process for every assumed adversary (14 times for tier-1 ISPs and 12,116 times for tier-2 ISPs) and calculate the average fraction of onion services that each can observe.
This yields an estimate of the proportion of the network that is observable by an adversary at each tier.

Figure~\ref{fig:bgp} shows the number of observable onion service hosts for each type of adversary.
When considering only one side of the path (either ingress or egress), nearly all tier-1 ISPs can monitor a significant portion of onion services: on average, a single tier-1 ISP can observe 73,774 out of 77,787 onion services (95\%).
Similarly, several hundred tier-2 ISPs have comparable monitoring capabilities; for example, the top 600 ISPs can observe an average of 69,469 onion services, or about 89\%.

To assess the maximum resource requirements, we assume that tier-1 adversaries and large tier-2 adversaries can monitor near-100\% of onion services on the Tor network, because we consider a candidate IP address to be observable if packets in at least one direction between the IP address and one of its guard relays transits the adversary network.
We emphasize that this is intended to demonstrate the technical feasibility of the attack (e.g., estimating the number of ACL rules required for sampled NetFlow configurations and the corresponding storage demands) rather than to precisely estimate the number of affected onion services.
Since the real-world distribution of onion services across ASes is not well understood, accurately quantifying the impact remains challenging.

In addition, we approximate the number of active, frequently accessed onion services as follows.
Based on estimates of the total number of unique onion services (around 600,000)~\cite{tor26onion} and the highly skewed distribution of onion service access patterns (roughly the top 5\% of onion services fall into a substantially higher-access upper tail)~\cite{wang23comprehensive}, we estimate that there are around 30,000 such major onion services: $600,000 \times 0.05 = 30,000$.

Based on the high observability of candidate IPs at large ISPs and our approximation of the number of well-established, active onion services, we estimate that an assumed adversary may monitor approximately 30,000 candidate IP addresses.

\vspace{3pt}
\para{Configuration complexity, storage capacity, and monitoring bandwidth.}
With this estimate, we now assess the cost and resource requirements of network monitoring.
The telemetry data needed for our attacks are highly coarse-grained and low rate.
They are 1-out-of-1,000 sampled packet counts~\cite{popa23what,petryschuk24netflow,cloudflare25recommended,fastnetmon25fastnetmon}, with only one report collected per minute using standard NetFlow or sFlow.
Implementing this setup requires configuring Access Control Lists (ACLs) with about 30,000 filtering rules, which typical routers readily handle; see devices like Cisco Nexus 1000V Switch~\cite{cisco26acls} (released in 2009, retail price \$41,000, supporting 1.28 million rules) or Extreme 8720-32C~\cite{router26extreme,extreme21acl} (released in 2020, retail price \$33,510, supporting 102,400 rules per ACL type).
This setup would generate 360~KB of NetFlow exports per minute, amounting to around 516~MB/day or 15.6~GB/month~\cite{cisco11format}, well within the capacity of a large ISP.

\vspace{3pt}
\para{Router CPU/memory usage.}
To confirm the feasibility at a router level, we conduct real-world experiments with a commodity router.
We use a Juniper MX204 router\footnote{The Junos OS version is 21.2R3-S8.5.}~\cite{router26juniper} (released in 2017, retail price \$40,350) and measure router resource usage during large-scale traffic monitoring.
We replay the CAIDA Anonymized Internet Traces 2019 dataset~\cite{caida19anonymized} and vary the set of monitored IP addresses while measuring router resource utilization and monitoring accuracy.
This dataset contains 2.29 billion packets in total, with an average rate of 636,000 packets per second, representing high-volume realistic traffic.
For monitoring, we configure J-Flow, Juniper's own implementation of NetFlow~\cite{juniper24netflow}, with IPFIX and vary the number of monitored IP addresses from 0 (control) to 100,000.
We use 1:1,000 sampling with a default timeout of 60 seconds (15 seconds for inactive flows) and aggregate packet counts every minute for each IP address.
For comparison across different experiments, we sort all IP addresses in the dataset by the number of recorded packets and select the top-$N$ IP addresses for each experiment, where $N$ is the number of monitored IP addresses.
For example, when monitoring 10 IP addresses, we select the top 10 IP addresses by recorded packet count.
We run the experiment four times: once while gradually increasing the size of the monitored IP set, once while gradually decreasing it, and twice with randomized orders.

Our commodity router reliably supports coarse-grained traffic monitoring at scale using its version of NetFlow (J-Flow) in the presence of high-volume traffic.
Table~\ref{tab:router} summarizes the results.
All reported results are averages across the four repeated experiments, and FPC and RE denote Flexible PIC Concentrator (i.e., data plane) and Routing Engine (control plane), respectively.
We make three notable observations.
First, across all configurations, we observe {\em no} packet drops, J-Flow monitoring errors, or memory overload.
Second, the resulting reports are highly accurate: the largest absolute relative error compared to the ground truth is only 0.05\%.
Third, both CPU and memory utilization remain stable and lower end throughout all experiments.
Together, these results show that a commercial router can reliably and accurately monitor 100,000 active IP addresses while forwarding high-volume, realistic (e.g., generating up to 309,003 IPFIX packets and 1,175,091 records) traffic.

\subsection{Our Research Questions}
\label{sec:reserach-questions}
In this study, we investigate the following main research question: ``{\em How effective are NetFlow-based deanonymization attacks against Tor onion services?}''
To answer this, we consider a set of specific research questions:
\begin{packeditemize}
    \item[\textbf{RQ1.}] ``(Revisiting prior attacks) Can we induce download traffic on an onion service(s) to deanonymize it?'' (\S\ref{sec:download-inducing})
    \item[\textbf{RQ2.}] ``(Extending active attacks) Can we leverage introduction requests to induce traffic on an onion service(s) to deanonymize it?'' (\S\ref{sec:intro-inducing})
    \item[\textbf{RQ3.}] ``(Towards passive attacks) Can we deanonymize onion services without sending any traffic to them?'' (\S\ref{sec:passive})
    \item[\textbf{RQ4.}] ``(Exploring defenses) Can we effectively mitigate NetFlow-based deanonymization attacks?'' (\S\ref{sec:mitigation})
\end{packeditemize}

By answering these questions, we provide a comprehensive assessment of the threats that NetFlow-based deanonymization adversaries pose to Tor onion services.
In the following section, we summarize the attacks studed in this work.
\section{Attack Overview}
\label{sec:attack-overview}
We introduce two categories of NetFlow-based deanonymization attacks: active and passive attacks.
Our active attack studies revisit and extend prior attack work~\cite{chakravarty14effectiveness} by exploring traffic-inducing deanonymization in the context of Tor onion services.
In contrast, our passive attack study explores a new attack vector.
Our goal is to evaluate whether an adversarial capability first studied nearly two decades ago~\cite{murdoch07sampled}, yet largely neglected for over a decade, remains effective against a different target, namely onion services rather than Tor clients.

\begin{table*}[tb]
    \setlength{\tabcolsep}{4pt}
    \renewcommand{\arraystretch}{1.20}
    \caption{Comparison of Three NetFlow-Based Onion Deanonymization Attacks}
    \label{tab:summary}
    \footnotesize 
    \centering
    \begin{tabular}{l|c|c|c}
        \toprule
        & \multicolumn{2}{c|}{\bf Active Onion Deanonymization (\S\ref{sec:traffic-inducing})} & {\bf Passive Onion Deanonymization (\S\ref{sec:passive})} \\
        \cline{2-4}
        & \textbf{Download-Inducing Attack} & \textbf{Introduction-Inducing Attack} & \textbf{Puzzle Side-Channel Attack} \\
        \midrule
        \makecell[l]{\bf Target Selection} & \multicolumn{2}{c|}{Targeted (i.e., one OoI at a time)} & \makecell{Untargeted\\(i.e., multiple OoIs simultaneously)} \\
        \hline
        \makecell[l]{\bf Application\\\bf Requirements} & \makecell{OoIs must support\\reliable downloads} & None & None \\
        \hline
        \makecell[l]{\bf Attack Persistence} & \multicolumn{2}{c|}{Short-term (e.g., hours at most)} & Long-term (e.g., months to years) \\
        \hline
        \makecell[l]{\bf Puzzle Requirements} & \multicolumn{2}{c|}{None} & OoIs must enable puzzle defenses \\
        \bottomrule        
    \end{tabular}
\end{table*}

\begin{figure}[!t]
    \centering
    \includegraphics[width=0.98\linewidth]{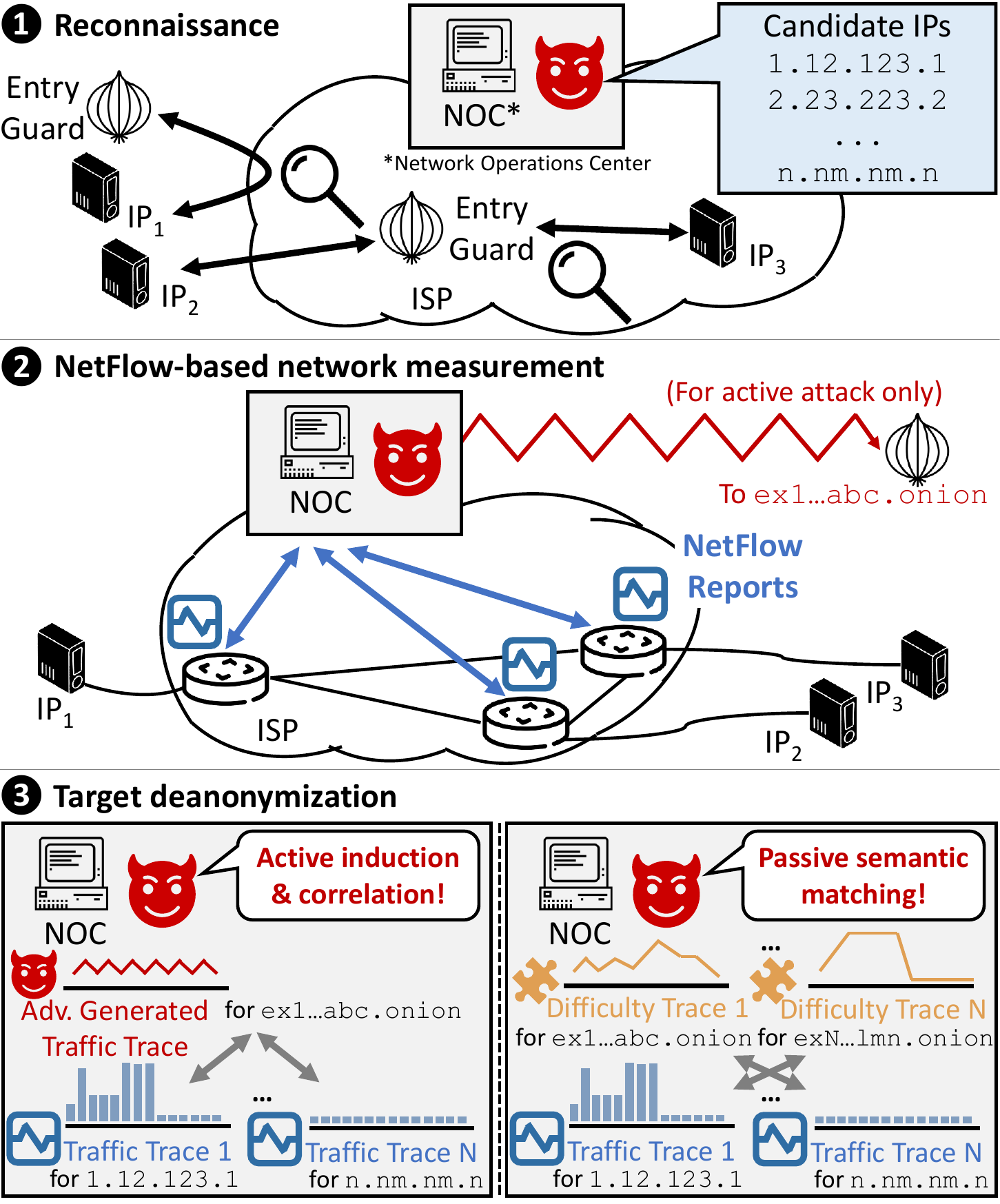}
    \caption{The three phases of our deanonymization attacks.}
    \label{fig:phases}
\end{figure}

\subsection{General Attack Phases}
\label{sec:attack-phases}
Figure~\ref{fig:phases} illustrates the three phases of our deanonymization attacks, sequentially launched by a network adversary that controls a network operations center (NOC) of an ISP: \ding{202}~reconnaissance, \ding{203}~NetFlow-based network measurement, and \ding{204}~target deanonymization.

\vspace{3pt}
\para{Phase-\ding{202}: Reconnaissance.}
The adversary first chooses a list of \texttt{.onion} addresses of interest (Onions of Interest, or OoIs) as attack targets.
And the adversary obtains a list of IP addresses that are candidate network addresses of the chosen OoI(s).
In the active attacks, each attack campaign targets one OoI and attempts to identify one of the candidate IPs that is most likely to be the network address of the targeted OoI.
Multiple active attack campaigns can be coordinated and executed in parallel or subsequently depending on the adversary's final goal.
In passive attacks, on the contrary, a single attack campaign attempts to deanonymize all OoIs in the list, trying to find matched IPs from the candidate list.
Further details on how each attack deanonymizes the OoIs are provided in the subsequent sections: \S\ref{sec:traffic-inducing} and \S\ref{sec:passive}.

\vspace{3pt}
\para{Phase-\ding{203}: NetFlow-based network measurement.}
Next, the adversary collects network telemetry data using sampled NetFlow.
For example, the adversary gathers one-minute aggregated packet counts (referred to as {\em traffic traces}) for each candidate IP identified during the reconnaissance phase, with a sampling rate of 1-out-of-1,000, which is a widely recommended sampling rate from multiple vendors~\cite{popa23what,petryschuk24netflow,cloudflare25recommended,fastnetmon25fastnetmon}. 
In addition to the passive NetFlow-based measurement, in the case of active attacks, the adversary also induces traffic on the OoI and records one-minute aggregated packet counts of the traffic she generates.

\vspace{3pt}
\para{Phase-\ding{204}: Target deanonymization.}
Finally, the adversary attempts to deanonymize the OoI(s).
As the exact procedure differs substantially between active and passive attacks, we defer the details to \S\ref{sec:traffic-inducing} and \S\ref{sec:passive}, respectively.

\subsection{Summary of Our Attacks}
\label{sec:summary}
Table~\ref{tab:summary} compares the three NetFlow-based deanonymization attacks at a high level.
The two active attacks are targeted, require adversary-induced traffic, and can be launched on demand over short time periods.
In contrast, the passive attack is untargeted (i.e., it can deanonymize multiple OoIs simultaneously), operates without interacting with its targets, and is opportunistic in nature.
These attacks present different trade-offs in terms of scalability, operational requirements, and attack opportunities.
\section{{\em Active} NetFlow-Based Deanonymization}
\label{sec:traffic-inducing}
In this section, we discuss two active NetFlow-based deanonymization attacks that induce certain traffic patterns on target Onions of Interest (OoIs) through different mechanisms.
First, we present an attack that induces traffic patterns through rate-controlled, reliable downloads (\S\ref{sec:download-inducing}), which is an extension of an approach proposed in a prior study~\cite{chakravarty14effectiveness}.
Second, we introduce a new active attack that leverages introduction requests for active traffic induction (\S\ref{sec:intro-inducing}).

\subsection{Download-Inducing Deanonymization}
\label{sec:download-inducing}
The first active attack extends earlier work~\cite{chakravarty14effectiveness} presented in 2014.
Chakravarty et al.~\cite{chakravarty14effectiveness} empirically evaluated the effectiveness of NetFlow-based deanonymization attacks against Tor clients (not onion services though) using application-level traffic patterns.
Extending this earlier approach, we design a simple download traffic inducing (or download-inducing) attack against Tor onion services.
The core idea behind this attack is to induce specific, coarse-grained traffic patterns (e.g., square-wave patterns) on the OoI by downloading files from it.
Therefore, the attack requires the adversary to be able to reliably maintain downloads while controlling the download rate.
By flexibly controlling the download rate, the adversary can generate desired traffic patterns on the OoI.

When showing the feasibility of this attack, in this study, we set up the OoI to host a sufficiently large file that can be downloaded for the desirable duration (e.g., 1 minute) and the adversary repeatedly initiates and aborts downloads of this file.
In practice, even if the onion service does not host such a large file, the adversary could synchronize multiple downloads of smaller files to generate similar high-amplitude traffic patterns.

\vspace{3pt}
\para{Three attack phases.}
The adversary begins with Phase-\ding{202} reconnaissance, where the adversary chooses a list of OoIs and their candidate IP addresses.
In the default attack mode (which we mainly describe in this paper), the adversary targets one OoI at a time.
Multiplexing of concurrent active attacks against multiple OoIs is possible with distinct traffic patterns (or codes) with low inter-code correlation at the cost of reduced attack accuracy.
Unless otherwise noted, we assume the default single-target attack mode in our analysis and evaluation.
We provide additional experiment results on multiplexing concurrent attacks at the end of this section.

During Phase-\ding{203}, while collecting one-minute aggregated packet counts for candidate IPs, the adversary also {\em induces} a certain traffic pattern (e.g., a square-wave pattern) on the target OoI for some time (e.g., one hour in our experiments) and records the resulting traffic.
For example, to create a square-wave pattern, the adversary alternates between downloading a large file from the OoI for 60 seconds and remaining idle for 60 seconds.

In Phase-\ding{204} target deanonymization, the adversary correlates the download traffic she successfully generates with the coarse NetFlow measurements for many candidate IPs.
The adversary computes the Pearson correlation coefficient between the adversary-generated traffic trace (i.e., the induced traffic pattern) and each observed traffic trace (i.e., the traffic trace associated with each candidate IP).
The trace pair with the highest correlation coefficient is identified as a candidate match.
The adversary may accept the pair as a true match only if the coefficient exceeds a certain threshold for a higher confidence level.

This relatively simple attack strategy offers a clear improvement over the similar active attacks against Tor clients proposed by Chakravarty et al.~\cite{chakravarty14effectiveness}.
Unlike those attacks, it can be launched without tricking a target into visiting a decoy website.

\vspace{3pt}
\para{Evaluation I (small-scale, isolated).}
We demonstrate the feasibility and effectiveness of download-inducing attacks at a small scale and in an isolated setting, but on the {\em live} Tor network.
We deploy an actual onion service within our institution and on AWS,\footnote{All Tor implementations are based on version 0.4.8.12-dev (93df26b).} and carefully conduct the attacks against it.
Our adversary runs a single Tor daemon with an attack script that periodically initiates downloads from the OoI.
To obtain non-target onion service traffic traces, we conduct a separate experiment with another onion service that is never targeted by the adversary.
This evaluation is isolated in the sense that it considers only traffic traces generated within our experiment setup.
Later in this section, we incorporate real-world, unrelated traffic traces to evaluate the attack in a more open setting.

We conduct two 26-hour experiments under otherwise identical conditions.
In the first, the adversary targets our OoI by repeatedly downloading a 10~GB HTML file for 60 seconds, aborting the download, and then remaining idle for 60 seconds, effectively creating a square-wave pattern that alternates every minute.
In the second, the adversary remains completely idle, allowing us to collect non-attack traffic traces (i.e., non-target onion service traffic traces).
Throughout both experiments, we record one-minute aggregated packet counts near the onion service at a sampling rate of 1-out-of-1,000.
During the first experiment, we also record one-minute aggregated packet counts at the adversary's own machine (without sampling) which serves as the adversary-generated traffic trace.
We repeat the experiments twice, for a total of three runs, to collect additional traffic traces.

We introduce realistic background traffic into the experiments.
We deploy five scripted clients that continuously access our onion service using Tampermonkey~\cite{biniok26home}.
In addition to the large file used for the attack, the onion service hosts the Wikipedia archive~\cite{kiwix26wikipedia} using Zim and the scripted clients access this Wikipedia archive.
We consider the resulting background traffic to be highly noisy, thus, enabling us to test the attacks in a harsh environment.
First, five scripted clients repeatedly click random internal links at uniformly distributed intervals of 5--60 seconds, producing an average aggregate rate of 9.2 request per minute.
For context, a prior study examined 39,824 onion services and observed request rates for 3,140 of them~\cite{biryukov14content}; out rate falls between those of the 18th- and 19th-ranked services, indicating heavy background traffic.
Furthermore, the served content consists of highly diverse web pages (8.4 million pages averaging 44~MB, with a standard deviation of 66~MB), causing the background traffic volume to vary considerably across requests.

\begin{table}[t]
    \setlength{\tabcolsep}{4pt}
    \renewcommand{\arraystretch}{1.20} 
    \caption{Evaluation Results of Download-Inducing Attack}
    \label{tab:download-inducing}
    \footnotesize 
    \centering
    \begin{tabular}{l|c|c}
        \toprule
        & \textbf{Small-Scale (Isolated)} & \textbf{Large-Scale (Open)} \\
        \midrule
        \makecell[l]{\textbf{Attack Setting}} & 1-of-11 & 1-of-12,041 \\
        \hline
        \textbf{Success Rate} & 100\% (97$/$97) & 100\% (97$/$97) \\
        \hline
        \textbf{Matching $r^{*}$} & \multicolumn{2}{c}{[0.72, 0.98]} \\
        \hline
        \textbf{Non-Matching $r^{**}$} & [-0.36, 0.39] & [-0.53, 0.68] \\
        \bottomrule
        \multicolumn{3}{l}{\footnotesize $^{*}$Between adversary-generated and target OoI traffic traces.} \\
        \multicolumn{3}{l}{\footnotesize $^{**}$Between adversary-generated and non-target traffic traces.}
    \end{tabular}
\end{table}

Table~\ref{tab:download-inducing} summarizes our attack evaluation results against the onion service in our institution.
We evaluate whether the adversary can identify the target OoI in the presence of non-target onion service traffic.
To obtain more evaluation samples without repeating the experiments too many times (due to ethical concerns), we conduct longer experiments and extract one-hour segments using a sliding window.
Specifically, we divide the paired adversary-generated and target OoI traffic traces into one-hour windows with a 12-minute overlap.
Each experiment therefore yields an average of 32.3 one-hour trace pairs.
For each window, the adversary compares the trace she generated with the corresponding target OoI trace and ten one-hour traces sampled from a non-target onion service.
The adversary then attempts to determine which of these 11 traces belongs to the target OoI.
Across all evaluation windows, the adversary correctly identifies the target OoI trace.
The correlation coefficients between the adversary-generated and target OoI traces range from 0.72 to 0.98, whereas those between the adversary-generated and non-target traces range from -0.36 to 0.39.

We repeat the attack against a target OoI hosted in the AWS North Virginia region (EC2 c5a.4xlarge instance), and Table~\ref{tab:download-inducing-aws} shows that the attack remains highly effective.
As before, we observe a 100\% success rate, with an even larger correlation-coefficient gap between matching and non-matching cases.
Note that the results should be interpreted in light of how the evaluation samples are constructed: the 11 one-hour traces are derived from one 25-hour trace, while the 97 and 29 evaluation windows come from three and one experimental runs, respectively.

\vspace{3pt}
\para{Evaluation II (large-scale, open).}
We incorporate real-world traffic traces to scale up our evaluation and test the attack in a more open setting.
We collect 1,203 real-world traces by operating our {\em own guard relay.}
At this relay, we record pseudonymized (within a single measurement session), one-minute-aggregated sampled packet counts for each host connecting to the relay over a 25-hour period.
We take special care for ethical considerations when collecting these traces under the guidance of Tor Research Safety Board~\cite{tor26research}.
Throughout the collection period, we do not inspect or store packet contents; only IP and port information from TCP/IP headers is accessed.
To further reduce risk, all IP addresses are always pseudonymized before being written to disk.

We use these large-scale real-world traces as additional false candidates.
For each evaluation window, the adversary compares the adversary-generated trace against 12,041 candidate traces: 1 trace from the target OoI, 10 traces from a non-target onion service, and 12,030 traces derived from the real-world Tor traffic.
Note that these real-world traces include not only onion services' traffic but also Tor clients' traffic and we do not differentiate them (just like real-world adversaries).
This way, we can evaluate our attack in the presence of real-world background onion services as well as clients.

Table~\ref{tab:download-inducing} shows that across all evaluation windows, the adversary successfully identifies the target OoI in our institution among all candidate traces.
Although the maximum correlation coefficient between the adversary-generated trace and the false candidates increases from 0.39 to 0.68, it remains below the minimum correlation coefficient between the adversary-generated trace and the target OoI trace (0.72).
Table~\ref{tab:download-inducing-aws} shows that the attack remains effective against the OoI hosted on AWS.
We again observe a 100\% success rate, with a larger gap in correlation coefficients between matching and non-matching cases.
Note that the 12,041 one-hour traces are derived from 1,204 25-hour traces.

These results show that download-inducing attacks are highly effective in deanonymizing onion services only with NetFlow traffic telemetry, even with real-world Tor traffic and over more than ten thousand candidate traces.
They also suggest that the high effectiveness of the download-inducing attack does not stem from artifacts of a specific experimental hosting environment.

\begin{table}[t]
    \setlength{\tabcolsep}{4pt}
    \renewcommand{\arraystretch}{1.20} 
    \caption{Additional Evaluation Results of Download-Inducing Attack on AWS}
    \label{tab:download-inducing-aws}
    \footnotesize 
    \centering
    \begin{tabular}{l|c|c}
        \toprule
        & \textbf{Small-Scale (Isolated)} & \textbf{Large-Scale (Open)} \\
        \midrule
        \makecell[l]{\textbf{Attack Setting}} & 1-of-11 & 1-of-12,041 \\
        \hline
        \textbf{Success Rate} & 100\% (29$/$29) & 100\% (29$/$29) \\
        \hline
        \textbf{Matching $r^{*}$} & \multicolumn{2}{c}{[0.86, 0.98]} \\
        \hline
        \textbf{Non-Matching $r^{**}$} & [-0.31, 0.35] & [-0.50, 0.65] \\
        \bottomrule
        \multicolumn{3}{l}{\footnotesize $^{*}$Between adversary-generated and target OoI traffic traces.} \\
        \multicolumn{3}{l}{\footnotesize $^{**}$Between adversary-generated and non-target traffic traces.}
    \end{tabular}
\end{table}

\vspace{3pt}
\para{Evaluation III (diverse, simulated).}
We evaluate how the duration and intensity of the attack affect its effectiveness.
For this evaluation, we design a simulation framework using data collected from our live experiments.
The evaluation incorporates 10,000 synthesized non-target onion service traffic traces (i.e., background-only traffic) together with one target OoI traffic trace.
For each configuration, we run the evaluation 100 times.

In the simulation, we assume that the adversary can induce traffic at the desired timing, consistent with our empirical observations presented earlier in this section.
The attack is launched in the default single-target mode and is controlled by two variables: duration and intensity.
Here, intensity is express as a multiplier of the average background traffic volume.
For example, an intensity of 1x means that the attack injects traffic equal to the mean background traffic rate at the desired timings.
We then compute the Pearson correlation coefficient between the expected traffic pattern, represented as a binary sequence indicating when traffic is induced, and each traffic trace.
We check whether the target OoI traffic trace achieves the maximum correlation coefficient.

We synthesize 10,000 non-target onion service traffic traces that model OoIs operating under normal conditions by randomly injecting background packets into base traffic traces measured without any attack or background traffic.
Specifically, the number of packets injected per minute is determined from the measured background-only traffic traces.
Because the traffic traces are ultimately converted into one-minute aggregated packet counts, they are largely insensitive to the fine-grained timing of individual packet injections.

In contrast, the target OoI traffic trace additionally includes injected attack traffic.
That is, non-target onion service traces consist of base traffic and background traffic, whereas the target OoI trace consists of base traffic, background traffic, and attack traffic.
The attack traffic is injected every two minutes, forming a square-wave traffic pattern with a 60-second active interval.

Both non-target onion service and OoI traffic traces are then sampled at a rate of 1:1,000.
We evaluate six different attack durations, 10 minutes, 1 hour, 10 hours, 1 day, 1 week, and 1 month (30 days), and five attack intensities, 0.1x, 0.5x, 1x, 2x, and 5x.

\begin{figure}[t!]
    \centering
    \includegraphics[width=\linewidth]{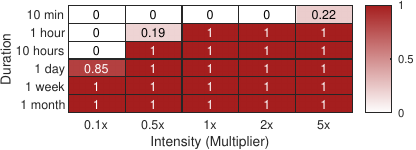}
    \caption{Deanonymization success rates for different attack durations and intensities.}
    \label{fig:simul-single}
\end{figure}

Figure~\ref{fig:simul-single} presents the deanonymization success rates across different attack durations and intensities.
We calculate the success rate as the fraction of trials in which the target OoI traffic trace is correctly identified among 10,001 candidate traces (i.e., 10,000 non-target onion service traces and one OoI trace).
Red indicates higher success rates while white indicates lower success rates.
Overall, the success rate increases with both attack duration and attack intensity.
The attack achieves a 100\% success rate with a one-week duration regardless of intensity, and also reaches a 100\% success rate with a one-hour duration when the intensity is at least 1x.
This trend is expected, as a longer or more intense attack directly produces a stronger traffic signal that is easier to distinguished from non-target background-only traffic.

\vspace{3pt}
\para{Evaluation IV (multi-target, simulated).}
We extend the attack from a single target to multiple concurrent targets using the simulation framework from the preceding evaluation.
So far, we have assumed that the adversary uses a single traffic pattern, namely the square-wave pattern, to identify one target OoI.
However, an adversary may wish to attack multiple OoIs simultaneously.
To support this setting, the adversary must assign a distinct traffic pattern to each target so that the induced signals can be separated during correlation analysis.
We therefore generate a codebook of balanced binary traffic patterns, where the first code corresponds to the default square-wave pattern and each subsequent code is selected to minimize its absolute correlation with the codes selected so far.
This reduces the likelihood that traffic induced for one target is mistakenly matched to another target's code.

This approach follows the principle of code-division multiplexing~\cite{gold67optimal,pursley77performance}, where multiple signals are transmitted over the shared channel by assigning each signal a distinct spreading code.
A receiver can recover a desired signal by correlating the received mixture with the corresponding code, provided that the codes have sufficiently low cross-correlation.
Analogously, in our setting, concurrent attacks share the same observation space, i.e., NetFlow traffic traces, and each target is assigned a distinct binary traffic code.
The detector separates targets by correlating each observed trace with each assigned code; therefore, low inter-code correlation is important for preventing one target's induced traffic pattern from being confused with another.

In this multi-target evaluation, we use three configurations: a one-hour duration with 0.5x intensity, a one-hour duration with 1x intensity, and a ten-hour duration with 0.5x intensity.
For each configuration, we vary the number of concurrent targets as 1, 2, 3, 4, 5, 10, 20, and 50.
For each target, we synthesize one target OoI traffic trace by injecting attack traffic according to its assigned code.
We then combine these target OoI traces with 10,000 non-target onion service traffic traces and compute the correlation between every candidate trace and every target code.
A target OoI is considered successfully deanonymized if its corresponding traffic trace achieves the highest correlation with the code assigned to that OoI.
Finally, we report the overall success rate across all targets and repetitions, along with the maximum absolute correlation among the generated codes as a diagnostic of codebook separability.

\begin{figure}[t!]
    \centering
    \includegraphics[width=\linewidth]{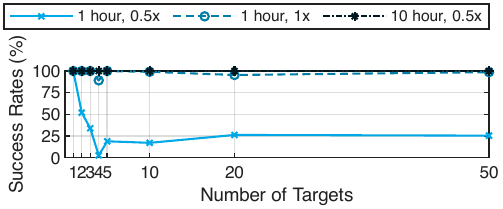}
    \caption{Deanonymization success rates under three multi-target settings.}
    \label{fig:simul-multi}
\end{figure}

Figure~\ref{fig:simul-multi} shows the deanonymization success rates of multi-target active attacks under three different configurations.
When the attack duration is one hour, the attack with 0.5x intensity begins to degrade as soon as the adversary targets multiple OoIs concurrently.
Increasing the intensity to 1x substantially improves the success rate, but the attack still does not consistently achieve a 100\% success rate once the number of concurrent targets exceeds three.
In contrast, with a ten-hour attack duration, even the 0.5x-intensity attack successfully deanonymizes all target OoIs across all evaluated target counts.
This again highlights the importance of attack duration: longer attacks provide more observations of the induced traffic pattern, making each target's code easier to distinguish from non-target onion service traffic and from the codes assigned to other concurrent targets.
Meanwhile, the maximum absolute correlation among the generated codes remains low: it is 0 for up to five targets, increases to 0.06, 0.13, and 0.2 for 10, 20, and 50 targets with one-hour codes, respectively, and remains even lower for ten-hour codes, reaching only 0.02, 0.03, and 0.05 for 10, 20, and 50 targets.

\vspace{3pt}
\para{Discussion.}
We now discuss three observations about the attack's effectiveness and practical implications:
\begin{packeditemize}
    \item {\em Impact of Tor's puzzle-based DoS defense.}
        Tor's puzzle-based DoS defense (see \S\ref{sec:puzzle-background}) has a negligible effect on download-inducing attacks.
        Repeating our main experiments with the puzzle-based DoS defense enabled yields the same results.
        This is expected because the download-inducing attack generates few introduction requests, while the defense is designed to mitigate floods of introduction requests.
        Consequently, the defense remains inactive throughout the experiments and does not affect the attack's effectiveness.
    \item {\em Impact of partial visibility.}
        As discussed in our threat model (\S\ref{sec:large-scale-arguments}), we consider a candidate IP address to be observable if the adversary can consistently observe packets in at least one direction between the IP address and one of its guard relays.
        To evaluate whether such partial visibility is sufficient in practice, we split each OoI traffic trace into ingress-only and egress-only traces and repeat the main evaluation.
        We find that partial visibility has little effect on attack effectiveness: egress-only traces achieve the same 100\% success rate as in the original evaluation, while ingress-only traces identify all but two targets.
        Although traffic volumes are highly asymmetric, the induced temporal pattern remains visible in both directions.
        A likely explanation is that sustained downloads generated corresponding reverse-direction traffic, including TCP acknowledgments and Tor protocol messages, which preserve the same alternating periods of activity and inactivity.
        We also observe that traffic is distributed evenly across relays, suggesting that the induced patterns remain observable even when only a subset of guard relays is monitored.
        Taken together, these results indicate that partial traffic visibility has a negligible impact on attack effectiveness.
    \item {\em Comparison with attacks against Tor clients~\cite{chakravarty14effectiveness}.}
        Our download-inducing attack has several advantages over the earlier attack against Tor clients~\cite{chakravarty14effectiveness}.
        First, targeting onion services requires fewer assumptions: the adversary does not need to lure a client to an adversary-controlled, decoy website or force it to download a large file, because the adversary herself can do so.
        Second, the candidate set is substantially smaller: whereas client deanonymization must distinguish a target from millions of users~\cite{dingledine14traffic}, our attack considers only tens of thousands of active, frequently accessed onion services.
        Third, our adversary can directly control the download process and thus precisely regulate the induced traffic pattern.
        These advantages help explain why our attack achieves highly effective identification in our large-scale evaluation, where the OoI is identified among more than ten thousand candidate traces.
\end{packeditemize}

\subsection{Introduction-Inducing Deanonymization}
\label{sec:intro-inducing}
A key limitation of the download-inducing attack is that a target OoI must support reliable downloads, allowing the adversary to induce the desired traffic pattern.
To eliminate this requirement, we propose an introduction traffic inducing (or introduction-inducing) attack against Tor onion services.
The core idea behind this new attack is to induce specific traffic patterns on the OoI by sending a number of introduction requests.
Because anyone can send introduction requests to an onion service with a known onion address and an available descriptor, the attack applies regardless of the application running on the OoI.

To launch such attacks, the adversary may either send a moderate number of introduction requests to carefully shape the desired traffic pattern (i.e., a low-rate attack) or send as many requests as possible to make the induced traffic clearly visible despite sampling (i.e., a high-rate attack).
When showing the feasibility of this attack, in both cases, we set up the adversary to repeatedly send batches of introduction requests interleaved with idle periods.

\vspace{3pt}
\para{Three attack phases.}
The introduction-inducing attack follows the same three phases as the download-inducing attack, as described in \S\ref{sec:attack-phases} and \S\ref{sec:download-inducing}.
The only difference is that, during Phase-\ding{203}, the adversary induces traffic by sending introduction requests to the target OoI rather than downloading a file from it.
For example, to induce a square-wave traffic pattern, the adversary alternates between sending introduction requests for 60 seconds and remaining idle for 60 seconds.
This is also a targeted attack, and thus each attack has a clear target definition.
Unless otherwise noted, we assume the default single-target attack mode in our evaluation; i.e., the adversary targets one OoI at a time.

\begin{table*}[tb]
    \setlength{\tabcolsep}{4pt}
    \renewcommand{\arraystretch}{1.20} 
    \caption{Evaluation Results of Introduction-Inducing Attack}
    \label{tab:intro-inducing}
    \footnotesize 
    \centering
    \begin{tabular}{l|c|c|c|c}
        \toprule
        & \multicolumn{2}{c|}{\bf Low-Rate (15~requests/min)} & \multicolumn{2}{c}{\bf High-Rate (1,383 requests/min)} \\
        \cline{2-5}
        & \textbf{Small-Scale (Isolated)} & \textbf{Large-Scale (Open)} & \textbf{Small-Scale (Isolated)} & \textbf{Large-Scale (Open)} \\
        \midrule
        \textbf{Attack Setting} & 1-of-11 & 1-of-12,041 & 1-of-11 & 1-of-12,041 \\
        \hline
        \textbf{Success Rate} & 100\% (98$/$98) & 98\% (96$/$98) & 24\% (23$/$96) & 13.5\% (13$/$96) \\
        \hline
        \textbf{Matching $r^{*}$} & \multicolumn{2}{c|}{[0.15, 0.93]} & \multicolumn{2}{c}{[-0.97, 1]} \\
        \hline
        \textbf{Non-Matching $r^{**}$} & [-0.28, 0.21] & [-0.57, 0.65] & [-0.38, 0.33] & [-0.78, 0.86] \\
        \bottomrule
        \multicolumn{5}{l}{\footnotesize $^{*}$Between adversary-generated and target OoI traffic traces.\quad$^{**}$Between adversary-generated and non-target traffic traces.}
    \end{tabular}
\end{table*}

\vspace{3pt}
\para{Evaluation I (small-scale, isolated).}
Similar to the previous section, we evaluate the effectiveness of introduction-inducing attacks on the live Tor network while initially considering only traffic traces generated within our experiment setup.
We deploy an onion service within our institution and on AWS, and conduct two experiments: one in which the adversary targets the OoI and another in which a separate onion service is not targeted, allowing us to collect non-target onion service traffic traces.
Our adversary runs a single Tor daemon on a single CPU core for the low-rate attack and multiple daemons across multiple CPU cores for the high-rate attack.
An attack script periodically sends introduction requests to the target OoI.

For each attack type (i.e., low-rate or high-rate), we conduct two 26-hour experiments under otherwise identical conditions.
In the first, the adversary alternates between sending introduction requests to the OoI for 60 seconds and remaining idle for 60 seconds.
In the second, the adversary remains inactive.
As before, we record one-minute aggregated packet counts near the onion service at a sampling rate of 1-out-of-1,000 in both experiments.
For the first experiment, we also record one-minute aggregated packet counts at the adversary's own machine (without sampling).
We again deploy five scripted clients to generate realistic background traffic and repeat each experiment twice, resulting in three runs for each.

Table~\ref{tab:intro-inducing} summarizes the attack evaluation results against a target OoI in our institution.
We follow the procedure described in \S\ref{sec:download-inducing}: paired adversary-generated and target OoI traces are divided into overlapping one-hour segments, and each adversary-generated trace is compared with the corresponding target OoI trace and ten non-target traces.
The adversary then attempts to identify the target OoI trace among the 11 traces.

First, under the low-rate attack, where the adversary sends an average of 15 introduction requests per minute, the adversary successfully identifies the target OoI among 11 candidate traces across all 98 evaluation windows.
The correlation coefficients between the adversary-generated traces and the target OoI traces range from 0.15 to 0.93, whereas those between the adversary-generated traces and the non-target onion service traces range from -0.28 to 0.21.
Notably, the separation between them is substantially smaller than that observed for the download-inducing attack (\S\ref{sec:download-inducing}), with some overlap, likely due to the indirect nature of introduction-request-based traffic induction.

Specifically, in the download-inducing attack, the adversary directly generates the desired traffic pattern, for example, by repeatedly downloading a large file.
In contrast, the introduction-inducing attack induces traffic indirectly by sending introduction requests.
The requests themselves contribute little traffic; instead, the observed pattern arises primarily from the target OoI's request-handling activity.

This finding is further supported by the high-rate attack evaluation.
When the adversary sends an average of 1,383 introduction requests per minute, the adversary fails to identify the target OoI among 11 candidate traces in 73 of the 96 evaluation windows.
Although some target OoI traces still correlate strongly with the adversary-generated traces, their correlation coefficients vary widely, from \mbox{-0.97} to 1.

Counterintuitively, increasing the request rate reduces attack effectiveness.
We attribute this degradation to overload at the target OoI caused by the excessive volume of introduction requests.
After receiving too many requests, the target OoI becomes unable to process new requests for extended periods of time (15.6 hours on average).
As a result, the traffic generated by the target OoI no longer follows the adversary-induced pattern.

We repeat the low-rate attack against a target OoI hosted in the AWS North Virginia region (EC2 c5a.4xlarge instance).
Table~\ref{tab:intro-inducing-aws} shows the same 100\% success rate, suggesting that the attack's high effectiveness is independent of the hosting environment.

\vspace{3pt}
\para{Evaluation II (large-scale, open).}
We now evaluate the introduction-inducing attack in the presence of real-world traffic traces.
In Table~\ref{tab:intro-inducing}, compared to the isolated setting, we observe a moderate reduction in attack effectiveness for the low-rate attack: the adversary identifies the target OoI among 12,041 candidate traces in 96 of the 98 evaluation windows.
Additionally, the maximum correlation coefficient between the adversary-generated traces and non-target onion service traces increases from 0.21 to 0.65.
The high-rate attack performs much worse: the adversary successfully identifies the target OoI in only 13 evaluation windows and fails in the remaining 83.
In practice, an adversary may repeat the low-rate attack over multiple runs, potentially with a longer attack duration, to identify OoIs with higher confidence.
In Table~\ref{tab:intro-inducing-aws}, where we repeat the low-rate attack against a target OoI hosted on AWS, we observe a 100\% success rate with a much larger correlation-coefficient gap between matching and non-matching cases.

These results demonstrate that introduction-inducing attacks can effectively deanonymize onion services with only NetFlow, even without relying on assumptions about the target OoIs.
They also show that low-rate attacks are substantially more effective than high-rate attacks due to the indirect nature of introduction-based traffic induction and that their high effectiveness is not specific to a particular hosting environment.

\begin{table}[ht]
    \setlength{\tabcolsep}{4pt}
    \renewcommand{\arraystretch}{1.20} 
    \caption{Additional Evaluation Results of Low-Rate Introduction-Inducing Attack on AWS}
    \label{tab:intro-inducing-aws}
    \footnotesize 
    \centering
    \begin{tabular}{l|c|c}
        \toprule
        & \textbf{Small-Scale (Isolated)} & \textbf{Large-Scale (Open)} \\
        \midrule
        \makecell[l]{\textbf{Attack Setting}} & 1-of-11 & 1-of-12,041 \\
        \hline
        \textbf{Success Rate} & 100\% (27$/$27) & 100\% (27$/$27) \\
        \hline
        \textbf{Matching $r^{*}$} & \multicolumn{2}{c}{[0.59, 0.92]} \\
        \hline
        \textbf{Non-Matching $r^{**}$} & [-0.26, 0.33] & [-0.61, 0.67] \\
        \bottomrule
        \multicolumn{3}{l}{\footnotesize $^{*}$Between adversary-generated and target OoI traffic traces.} \\
        \multicolumn{3}{l}{\footnotesize $^{**}$Between adversary-generated and non-target traffic traces.}
    \end{tabular}
\end{table}

\vspace{3pt}
\para{Evaluation III (diverse, simulated) and Evaluation IV (multi-target, simulated).}
The evaluations of attack duration, attack intensity, and concurrent targets presented for download-inducing attacks also apply to introduction-inducing attacks.
These evaluations use a simulation framework whose central assumption is that the adversary can induce traffic at the desired times.
Our empirical observations presented earlier confirm that this assumption also holds for introduction-inducing attacks.
Therefore, we refer readers to \S\ref{sec:download-inducing} for results under these diverse settings.

\vspace{3pt}
\para{Discussion.}
We discuss two observations about the attack's effectiveness and practical implications:
\begin{packeditemize}
    \item {\em Impact of Tor's puzzle-based DoS defense.}
        As with download-inducing attacks, Tor's puzzle-based DoS defense (see \S\ref{sec:puzzle-background}) has a negligible effect on low-rate introduction-inducing attacks.
        We repeat our main experiments with the defense enabled and obtain the same results.
        The low-rate introducing-inducing attack sends introduction requests at a sufficiently low rate, keeping the puzzle-based defense inactive throughout the experiments.
    \item {\em Impact of partial visibility.}
        Similar to download-inducing attacks, we evaluate partial visibility by splitting each OoI traffic trace into ingress-only and egress-only traces.
        We find that partial visibility has little effect on attack effectiveness, reducing the number of successful identifications by at most two.
        This result is expected because introduction requests trigger multiple bidirectional Tor protocol activities, producing similar traffic patterns in both directions.
        Additionally, we again observe that traffic is distributed evenly across relays, indicating that partial traffic visibility is largely sufficient for introduction-inducing attacks.
\end{packeditemize}
\section{{\em Passive} NetFlow-Based Deanonymization}
\label{sec:passive}
So far, all the discussed attack strategies have been some variants of {\em previously known}, {\em active} (because target onion services must receive attack traffic) attacks.
From now, we introduce a {\em novel}, {\em passive} (as absolutely no traffic sent to target onion service) deanonymization attack.
This passive nature, compared to active attacks, would be attractive to a cautious adversary seeking to avoid injecting its own traffic.

\para{Active vs. {\em passive} attacks.}
This new passive attack strategy also utilizes the same NetFlow-based coarse-grained traffic traces, just like the previously discussed active deanonymization attacks.
What differentiates our passive attack from active attacks is that the passive attack correlates the NetFlow traces of the candidate IPs with the {\em indirectly inferred traffic overheads} of the Onions of Interest (OoIs), {\em not} with the traffic sent by the adversary.
For the inference of traffic overheads at an OoI, we utilize a novel side channel that reveals the internal server load of OoIs.
We show that, by using this new side channel, an adversary can deanonymize OoIs without sending any traffic to them.
We begin by introducing this new side channel found in the current Tor onion services (\S\ref{sec:puzzle-background}) and providing an overview of the attack (\S\ref{sec:passive-overview}).
We then describe the attack procedure in detail (\S\ref{sec:semantic-matching}) and present our live evaluation results (\S\ref{sec:passive-eval}).

\subsection{A Novel Side Channel: Puzzle Difficulty of an Onion Service}
\label{sec:puzzle-background}
The core idea that makes our attack passive is the use of a new side channel of the {\em puzzle difficulty} of an OoI.
In August 2023, the Tor Project officially introduced client puzzles as a defense mechanism against DoS attacks targeting onion services~\cite{zoneff23introducing}, giving all onion services the option to enable this defense.
Specifically, the defense is designed to mitigate introduction-flooding attacks, in which an adversary attempts to overwhelm an onion service with a flood of introduction requests.
Tor's adoption of client puzzles was driven by the urgent need to mitigate such attacks, particularly in the absence of other effective and readily deployable countermeasures~\cite{lee25onions}.

The concept of client puzzles was first introduced by Juels and Brainard~\cite{juels99client} in 1999 as a specialized proof-of-work mechanism for defending against DoS attacks.
To access a server protected by client puzzles, clients must solve cryptographic challenges issued by the server.
While solving a small number of puzzles imposes only a modest computational burden on legitimate clients, it significantly increases the cost for attackers, who must solve a large number of puzzles to sustain an attack.

Unlike traditional client puzzle schemes, Tor's design incorporates several unique features to satisfy its strict anonymity requirements.
Notably, an onion service dynamically adjusts the difficulty of the puzzles it suggests to clients based on its current traffic load.
This difficulty level is then {\em broadcast to all} Tor clients via Tor directory servers, forming a global feedback loop.
Because it relies on the global feedback mechanism, the suggested difficulty is publicly available to anyone without requiring special privileges.
\footnote{In contrast, classic client puzzle mechanisms~\cite{juels99client} allow servers to assign client-specific puzzle difficulty levels.
Such an approach is infeasible for Tor onion services because, when a client sends an introduction request, there is no direct communication channel between the client and the onion service through which the service can assign a client-specific puzzle difficulty~\cite{cimpanu19tor,kadianakis20how}.
}

This creates {\em new side channel} information about the traffic overhead experienced by an onion service with the puzzle DoS defense activated.
Worse yet, this side channel information is made publicly available.
Every five minutes, the onion service evaluates its recent operational state and updates the suggested puzzle difficulty accordingly.
The onion service increases the suggested difficulty after receiving a large number of introduction requests, decreases it after receiving only a few requests, and leaves it unchanged if neither condition applies.
For details on the difficulty adjustment algorithm, we refer readers to the specification~\cite{tor26proof}, source code~\cite{tor26tor}, and a recent study~\cite{lee25onions}.

\subsection{Overview of Passive Deanonymization Attack}
\label{sec:passive-overview}
Our passive adversary performs Phase-\ding{202} reconnaissance in the same way as described in \S\ref{sec:attack-phases}.
Next, during Phase-\ding{203} NetFlow-based network measurement, the adversary collects the puzzle status of the OoIs along with network telemetry (i.e., one-minute aggregated packet counts).
Specifically, the adversary periodically queries directory servers to obtain the suggested puzzle difficulty levels of the OoIs.
Using these puzzle measurements (which we call {\em difficulty traces}), the adversary can infer the approximate traffic overhead of each OoI at the time of the query.
Note that collecting difficulty traces is inexpensive.
For example, even when monitoring 30,000 onion services, our measurements show that each query round, performed once per minute, requires roughly 2,160 CPU-seconds on a single core (base 2.4~GHz, max 3.5~GHz).
This is practical for a large ISP using parallel processing.
The storage cost is also modest, at about 1~GB/day; see \S\ref{sec:large-scale-discussion} for details.

Unlike the active attacks, this puzzle side-channel attack is {\em untargeted}; that is, instead of having a fixed, small set of OoIs, the attacks aim at deanonymizing any OoIs in a large set of OoIs chosen at the beginning.
This is because reading the difficulty values of many OoIs is low cost and scalable.

In Phase-\ding{204} target deanonymization, the adversary deanonymizes the OoIs by semantically matching (which is different from traditional traffic correlation; see \S\ref{sec:semantic-matching}) difficulty traces with traffic traces.
This phase can be performed after an observation period of several hours, days, or longer, and may be repeated any number of times while the adversary continues monitoring the network.
In practice, the adversary may continuously perform low-cost monitoring (i.e., Phase-\ding{203}) while executing the mapping process (i.e., Phase-\ding{204}) only occasionally or periodically (say every week or month).

\begin{figure}[!t]
    \centering
    \includegraphics[width=\linewidth]{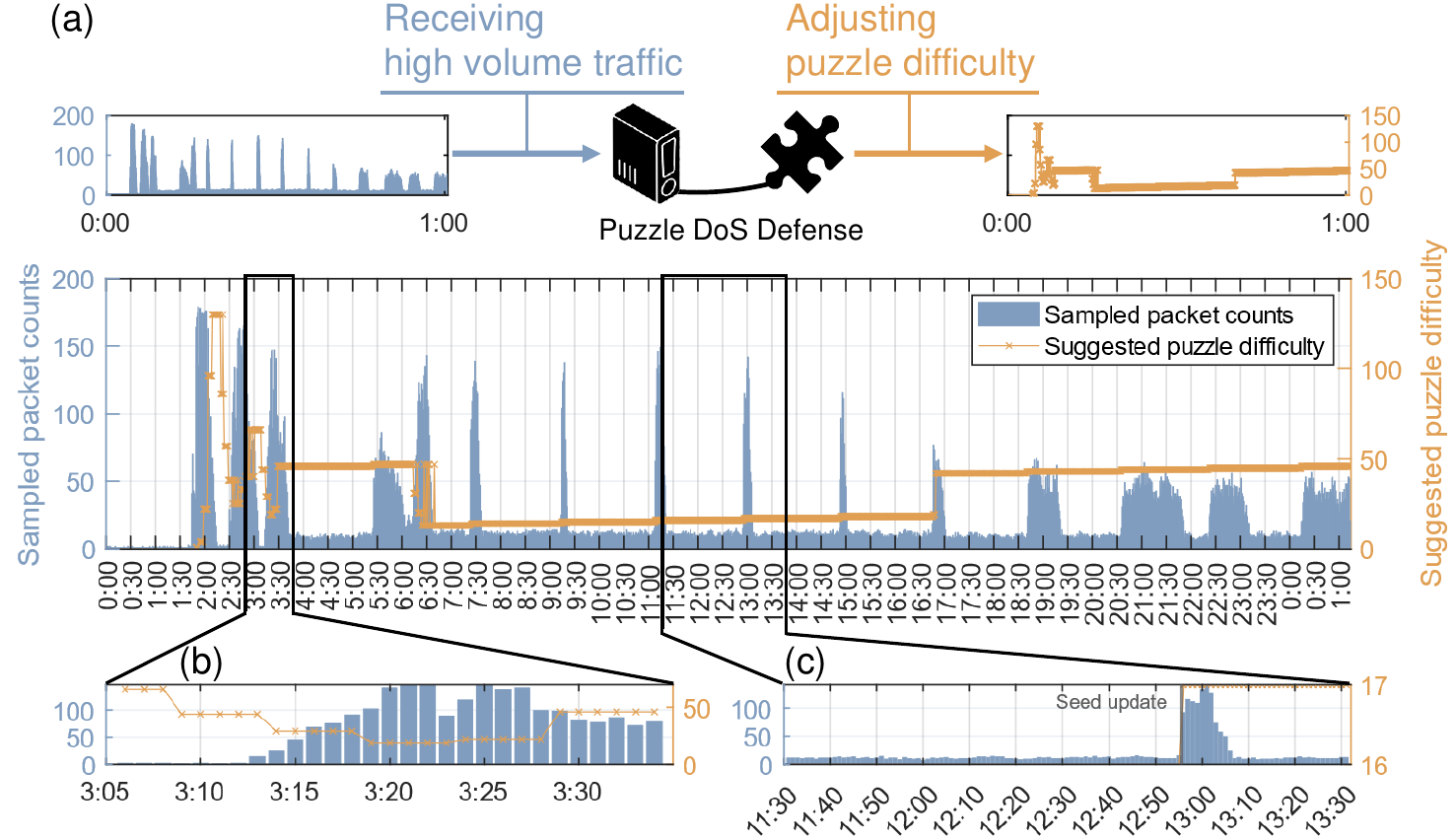}
    \caption{
    {\em Semantic matching} of difficulty and traffic traces for passive deanonymization of onion services.
    (a) {\em Real-world cause-effect relationship} between traffic volume and puzzle difficulty in the current Tor onion service.
    (b) A 30-minute zoom-in showing {\em invariants} of Tor's DoS defense system.
    (c) Another 30-minute zoom-in showing {\em artifacts} (e.g., sudden traffic spikes) found in the current DoS defense system.
    }
    \label{fig:examples}
\end{figure}

\subsection{{\em Semantic Matching} between Traffic and Puzzle Difficulty}
\label{sec:semantic-matching}
For the deanonymization of an OoI, the adversary needs a way to make {\em association} between two heterogeneous modality data.
A traffic trace is a measure that describes the traffic amount (in approximated packet counts) that a candidate IP address receives, while a difficulty trace records the puzzle's operating points determined by an OoI.
Figure~\ref{fig:examples} captures this distinction with a clear real-world example.
The traffic trace (blue) is the {\em input} to the puzzle-based DoS defense deployed at a server of an OoI (at the center), while the {\em resulting} difficulty trace (orange) shows how the server operates with difficulty adjustment.
The (rather expected) challenge of matching these two widely different traces is clear; that is, the two (blue and orange) traces do not seem to demonstrate the characteristics of typical correlated datasets.
For example, their ups and downs are different and even their longer term trends differ, rendering traditional statistical correlation techniques (e.g., lagged correlation) insufficient.
Our high-level intuition to this challenge is focus on the {\em casual relationship} between the two traces and resulting {\em invariants} found in the relationship, rather than their absolute values or localized changes.
If the traffic trace can explain the observed difficulty trace through the puzzle-based DoS defense mechanism, with the traffic trace acting as the {\em cause} and the difficulty trace as the {\em effect}, the adversary can infer that the onion address and candidate IP address likely correspond to the same OoI.

Understanding this casual relationship, let us summarize three key invariants in the relationship between the two traces denoted as {\bf (Inv-I)} to {\bf (Inv-III)}.

\vspace{2pt}
\para{(Inv-I): Curbing traffic load.}
The most fundamental invariant is the direct correlation between suggested puzzle difficulty and traffic load.
This follows the central principle of onion service DoS mitigation: when traffic volume increases (i.e., the service becomes busy), puzzle difficulty should rise accordingly, raising the computational cost of sending requests and thereby throttling incoming traffic.
For example, in Figure~\ref{fig:examples}(b), the suggested difficulty climbs from 22 to 46 at around 3:29, in response to a surge in sampled packet counts between 3:24 and 3:28.

\vspace{2pt}
\para{(Inv-II): Relaxing traffic load.}
Complementing {\bf (Inv-I)}, this invariant captures the inverse relationship: when traffic volume drops, puzzle difficulty should decrease to avoid imposing unnecessary burden on legitimate users.
In Figure~\ref{fig:examples}(b), the difficulty falls from 44 to 29 at 3:14, likely reflecting reduced traffic activity.

\vspace{2pt}
\para{(Inv-III): Non-zero difficulty.}
Yet another invariant stems from a subtle but highly reliable property of the suggested difficulty: a non-zero difficulty level necessarily implies the presence of significant traffic volume.
The fact that the puzzle-based DoS defense is activated in the first place indicates that the onion service has detected clear signs of attacks.
Unlike other invariants, this invariant is particularly useful to reject unlikely matches.
When non-zero difficulty is observed while the corresponding traffic volume remains consistently low, this mismatch signals an incorrect match.

In addition to invariants unaffected by the specifics of the DoS defense mechanism, we also identify two artifacts, {\bf (Art-I)} and {\bf (Art-II)}, that arise from particular design choices in Tor's DoS defense:

\vspace{2pt}
\para{(Art-I): Unchanging difficulty levels.}
This artifact arises from the cap on the number of concurrent connections being established and a threshold for broadcasting difficulty updates.
When this artifact is observed, the suggested difficulty remains unchanged for an extended period, even when the traffic volume is low.
This stagnation results from the minimum change threshold for broadcasting updates.
When the suggested difficulty is updated, any change less than 15\% from the original level is not broadcast but only applied internally.
As a result, a discrepancy arises between the internal and external views of the suggested difficulty. 
Because of the concurrent connection cap (explained below), the rate at which requests are processed can suddenly drop, which leads to prolonged periods of low traffic volume with unchanged difficulty levels.
In Figure~\ref{fig:examples}(c), this effect is visible as a sustained period, from 11:30 to 12:54, where sampled packet counts stay at or below 15, and the suggested difficulty remains constant at 16.

The reason for {the low traffic volume} is that the onion service limits the number of concurrent rendezvous circuits being established to 16 if a pending request solved a puzzle easier than the {internal} suggested difficulty.
Thus, even though the DoS adversary continues sending many requests that match the broadcast difficulty, the service treats them as easy (because the internally updated difficulty level is higher than the broadcast difficulty) and delays handling them.

\vspace{2pt}
\para{(Art-II): Traffic spikes following seed updates.}
The above artifact, {\bf (Art-I)}, is often accompanied by another artifact: a sudden spike in traffic following a prolonged period of low activity.
In addition to the suggested puzzle difficulty, an onion service also includes a {\em seed value} in its service descriptor: a randomly generated value used to prevent precomputation attacks.
This seed is rotated periodically, typically every 105 to 120 minutes.
When the seed is updated, the onion service ensures that all fields in the service descriptor, including the suggested difficulty, are up-to-date.
As part of this update, any previous discrepancy between the broadcast difficulty and the internally maintained value is resolved.
Once the updated internal suggested difficulty (previously unknown to external observers) is broadcast, the adversary can submit valid requests that solve puzzles at the correct difficulty level. 
This allows the onion service to resume establishing rendezvous circuits at its maximum rate, resulting in a sharp surge in traffic load.
This behavior is evident in Figure~\ref{fig:examples}(c), where the sampled packet counts rise sharply just after a seed update around 12:55.
At the same time, the suggested difficulty increases from 16 to 17.

Note that these artifacts are not invariants of the DoS defense mechanism and can thus be removed from Tor's puzzle defense implementation without degrading the defense effectiveness.
Yet, removing them has a negligible impact on our puzzle side-channel attack, as it can still deanonymize onion services with high accuracy by exploiting invariants; see \S\ref{sec:passive-mitigation}.

\vspace{2pt}
\para{Semantic matching procedure.}
For a given pair of difficulty and traffic traces, the adversary matches it by looking for semantic matching traits.
Specifically, the adversary quantifies the likelihood of a correct match based on the number of observed invariants and artifacts.
We define the {\em matching score} of a trace pair as the weighted sum of the occurrences of each invariant and artifact.
Formal definitions of the semantic matching process are provided in Appendix~\ref{app:formal}.

\vspace{2pt}
\para{Rejection mechanism.}
In practice, not all OoIs will have a perfect one-to-one correspondence with observable candidate IPs.
As a result, the adversary must be able to reject improbable matches.
We propose two rejection mechanisms.
First, the adversary discards a match if its matching score falls below a threshold $\tau_{reject}$.
This approach is based on the observation that incorrect matches tend to yield lower matching scores than correct ones.
Second, the adversary rejects a match if it violates {\bf (Inv-III)}, indicating a mismatch between observed traffic volume and puzzle difficulty.

\subsection{Evaluation}
\label{sec:passive-eval}
\para{Evaluation process.}
For each difficulty trace, we compute its semantic correlation with all available traffic traces and apply the rejection mechanisms.
If one or more pairs remain after rejection, we select the pair with the highest matching score as the final match for that difficulty trace.
This process yields at most one match for each difficulty trace.

\vspace{3pt}
\para{Trace collection.}
We collect OoI traffic and difficulty traces by deploying eight onion services (all located in our institution) on the live Tor network and sequentially launching DoS attacks against them.
We model introduction-flooding attacks in a simple but realistic manner: each attacker sends introduction requests to the onion service as fast as possible, consistent with real-world attacks observed against onion services~\cite{cimpanu19tor,kadianakis20how}.
Specifically, we initiate an introduction-flooding attack against each onion service after a random delay uniformly distributed between 0 and 12~hours.
Consequently, each onion service is targeted at a different time during the first 12 hours, after which all services are simultaneously under attack for the remainder of the experiment.

We record one-minute aggregated packet counts for each service with a 1:1,000 sampling rate.
At the same time, we query directory servers every minute to fetch the suggested puzzle difficulty and seed for each onion service.
To implement this periodic querying, we make a minor modification to Onionprobe~\cite{tor26onionprobe}, an open-source tool for monitoring onion service status.
Each experiment lasts around 25 hours and is repeated seven times, yielding 56 pairs of traffic and difficulty traces; we exclude one pair due to measurement errors, resulting in a final dataset of 55 trace pairs.

The onion services run on a server-grade CPU with unrestricted access to cores, whereas the DoS attackers operate on a separate server-grade CPU, with each attacker limited to twelve dedicated cores to minimize interference.
To incorporate background traffic, we first divide the background-only traffic trace collected in \S\ref{sec:traffic-inducing} into one-hour segments.
We then randomly select a segment and add it to each hour of every traffic trace.

We assume that the adversary has partial visibility into guard nodes and traffic directions as we consider a candidate IP address to be observable even if only a fraction of its traffic (e.g., packets {\em from} one of its guard nodes) transits the adversary network (see \S\ref{sec:large-scale-arguments}).
Accordingly, for each OoI, we randomly determine what fraction of its traffic (to/from each of its two guard nodes) is visible to the adversary.
This procedure is also used to mitigate potential overfitting: we repeat it separately for optimization and evaluation, ensuring that the two stages use distinct visibility assignments, thereby using distinct datasets.

Note that the number of collected trace pairs (55) is sufficiently large for evaluating our attack because this represents a scenario where 55 onion services are {\em simultaneously} subjected to DoS attacks.

\vspace{3pt}
\para{Evaluation scenarios.} We consider four evaluation scenarios, combining asynchronized and synchronized DoS attacks with perfect and limited adversarial knowledge.

Here, {\em asynchronized DoS attacks} refer to cases where attacks against different onion services begin independently, whereas {\em synchronized DoS attacks} refer to those where all attacks start simultaneously.
To simulate these two settings, we use full (i.e., 25-hour-long) traces collected throughout the entire experiment and truncated traces containing only data recorded after the first 12 hours, respectively.
The synchronized setting represents a worst-case scenario in which all onion services are attacked at the same time, making semantic matching significantly more difficult.
Therefore, the passive attack's real-world performance is expected to lie between the results observed in these two settings.

The {\em perfect-knowledge setting} posits that the adversary knows all relevant IP and onion addresses, such that every onion address can be mapped to a corresponding IP address.
In contrast, the {\em limited-knowledge setting} represents a more realistic scenario in which the adversary has incomplete knowledge and cannot associate some onion addresses with any candidate IP address.
Given that complete knowledge is improbable in practice, the real-world effectiveness of the attack is expected to fall between the results under these two settings.
To model limited knowledge, we randomly sample a subset of difficulty and traffic traces.

\vspace{3pt}
\para{Evaluation metrics.}
For the perfect-knowledge setting, we use the single metric, accuracy, defined as the ratio of correct matches to the total number of actual (ground-truth) matches.
For the limited-knowledge setting, we report two metrics: precision (the ratio of correct matches to the total number of predictions) and recall (the ratio of correct matches to the total number of actual matches).

\vspace{3pt}
\para{Optimization.}
To find the optimal parameters for our experiments, we apply Bayesian optimization~\cite{mockus78application}.
Specifically, we optimize five thresholds ($\tau_{InvI}$, $\tau_{InvII}$, $\tau_{ArtI}^{low}$, $\tau_{ArtI}^{high}$, and $\tau_{reject}$) and four weights ($w_{InvI}$, $w_{InvII}$, $w_{ArtI}$, and $w_{ArtII}$).
The objective function is defined as the sum of performance metrics across four evaluation scenarios: (1) asynchronized DoS attacks under the perfect-knowledge setting, (2) synchronized DoS attacks under the perfect-knowledge setting, (3) asynchronized DoS attacks under the limited-knowledge setting, and (4) synchronized DoS attacks under the limited-knowledge setting.
For the limited-knowledge settings, we fix the subset size to 40 during optimization.

For parameter bounds, we empirically constrain $\tau_{InvI}$ to lie between the 80th and 100th percentiles of the given traffic trace, and $\tau_{InvII}$ between the 0th and 20th percentiles.
More precisely, we define $\tau_{InvI}$ and $\tau_{InvII}$ as the $i$-th percentile of the traffic trace, where $i$ ranges from 80 to 100 and from 0 to 20, respectively.
{For $\tau_{ArtI}^{low}$ and $\tau_{ArtI}^{high}$, we let $i$ range from 0 to 100, while ensuring that $\tau_{ArtI}^{high}$ is greater than $\tau_{ArtI}^{low}$.} 
The weights are constrained to the interval $[0, 1]$, and $\tau_{reject}$ is bounded between 0 and 14.

We perform the optimization using 90 initial exploration steps, followed by 630 Bayesian optimization iterations.
This balances thorough exploration of the parameter space with efficient convergence toward optimal values.
As a result, we obtain the following parameter values:
\{$\tau_{InvI}$, $\tau_{InvII}$, $\tau_{ArtI}^{low}$, $\tau_{ArtI}^{high}$, $\tau_{reject}$, $w_{InvI}$, $w_{InvII}$, $w_{ArtI}$, $w_{ArtII}$\} = \{$95.08$, $3.49$, $22.13$, $94.65$, $1.22$, $0.15$, $0.05$, $0.37$, $0.57$\}.

Note that we use different datasets for optimization and final evaluation to avoid overfitting; particularly, we vary the fraction of each IP address's traffic that is visible to the adversary.

\begin{table}[t]
    \setlength{\tabcolsep}{2pt}
    \renewcommand{\arraystretch}{1.15} 
    \caption{Performance of Puzzle Side-Channel Attack}
    \label{tab:perf}
    \scriptsize
    \centering
    \begin{tabular}{cc cc cc}
        \toprule
        & \multirow{2}[2]{*}{\textbf{$|$Subset$|$}} & \multicolumn{2}{c}{\makecell{\textbf{Asynchronized DoS Attacks}\\(Realistic Scenarios)}} & \multicolumn{2}{c}{\makecell{\textbf{Synchronized DoS Attacks}\\(Contrived, Worst-Case Scenarios)}} \\
        \cmidrule(lr){3-4} \cmidrule(lr){5-6}
        & & \textbf{Precision} & \textbf{Recall} & \textbf{Precision} & \textbf{Recall} \\
        \midrule
        
        \multirow{6}{*}{\rotatebox{90}{Limited Knowledge}} 
        & 5 & 100\% (2$/$2) & 100\% (2$/$2) & 100\% (1$/$1) & 50\% (1$/$2) \\
        & 10 & 100\% (3$/$3) & 100\% (3$/$3) & 100\% (2$/$2) & 66.7\% (2$/$3) \\
        & 15 & 100\% (6$/$6) & 100\% (6$/$6) & 100\% (5$/$5) & 83.3\% (5$/$6) \\
        & 20 & 100\% (9$/$9) & 100\% (9$/$9) & 100\% (7$/$7) & 77.8\% (7$/$9) \\
        & 30 & 100\% (15$/$15) & 100\% (15$/$15) & 100\% (12$/$12) & 80\% (12$/$15) \\
        & 40 & 100\% (28$/$28) & 100\% (28$/$28) & 100\% (23$/$23) & 82.1\% (23$/$28) \\
        
        \midrule
        
        PK$^*$ & 55 & \multicolumn{2}{c}{100\% (55$/$55)$^{**}$} & \multicolumn{2}{c}{96.4\% (53$/$55)$^{**}$} \\
        
        \bottomrule
        
        \multicolumn{6}{l}{\footnotesize $^*$Perfect Knowledge.\quad $^{**}$Accuracy.}
    \end{tabular}
\end{table}

\vspace{3pt}
\para{Small-scale, isolated evaluation.}
We first evaluate the performance of the puzzle side-channel attack using only the OoI traffic and difficulty traces, both generated within our experiment setup.
Table~\ref{tab:perf} summarizes the evaluation results.
In the left two columns, under asynchronized DoS attacks, the attack achieves perfect precision and recall across all subset sizes in the limited-knowledge setting, as well as perfect accuracy in the perfect-knowledge setting.
In the right two columns, under synchronized DoS attacks, the attack maintains perfect precision across all subset sizes in the limited-knowledge setting, and achieves 96.4\% accuracy in the perfect-knowledge setting.

We highlight two key observations from these results.
First, the puzzle side-channel attack remains highly effective even in the challenging synchronized DoS attack setting.
This is particularly noteworthy because all synchronized DoS attacks follow the same realistic introduction-flooding strategy, demonstrating the worst-case performance of the attacks.
Second, the attack consistently achieves perfect precision across all scenarios.
This indicates that the adversary can reliably deanonymize OoIs with minimal risk of false positives, which is crucial for large-scale deployments.

We analyze the cases in which the attack fails to identify or misidentifies OoIs.
All observed failure cases occur under synchronized DoS attacks, where the adversary cannot utilize DoS onset timing.
In the limited-knowledge setting, the adversary fails to identify 1--5 targets because $\tau_{reject}$ is larger than the matching scores of the true positives.
In the perfect-knowledge setting, two targets are misidentified because a false candidate exhibits a similar traffic pattern and receives a higher matching score.

Additionally, we calculate relative score margins for all configurations.
In the limited-knowledge setting, the margin is defined as the normalized gap between the matching score and $\tau_{reject}$; in the perfect-knowledge setting, it is the normalized gap between the matching score and the highest non-matching score.
Across all subset sizes, the mean margins under limited knowledge are 39--43\% and 28--37\% for asynchronized and synchronized DoS attacks, respectively.
Under perfect knowledge, the corresponding mean margins are 60.1\% and 44.5\%.

\vspace{3pt}
\para{Large-scale, open evaluation.}
We scale up our evaluation in two ways: by incorporating real-world traffic traces collected at our guard relay (see \S\ref{sec:download-inducing}) and by further augmenting the resulting dataset with synthesized trace pairs.
For the latter, we synthesize 10,000 trace pairs that model OoIs operating normally (i.e., without DoS attacks).
This large set allows us to evaluate the performance of the attack at scale; i.e., when only a small fraction of OoIs are under DoS attacks while the vast majority are not.

First, we synthesize traffic traces by randomly injecting background packets into the non-DoS periods of each OoI traffic trace, similar to the simulation-based evaluations in \S\ref{sec:download-inducing}.
Next, we measure the suggested puzzle difficulty of the twelve real-world onion services (see \S\ref{sec:requirements}) and synthesize new difficulty traces by making variations in their timing and rotating difficulty values among services.
We heavily randomize the original difficulty traces from the real-world onion services to ensure that they are not attributable to any specific service.

The evaluation results are as follows.
First, under the perfect-knowledge setting, incorporating the real-world traffic traces does not change the puzzle side-channel attack's performance: it achieves {\em exactly the same} results as in the small-scale evaluation in Table~\ref{tab:perf}.
In the limited-knowledge setting, where no actual (ground-truth) matches exist due to sampling, the attack correctly rejects all matches, resulting in zero false positives.
These results show that unrelated real-world onion service traffic has no impact on the effectiveness of the attack, as such traces are filtered out by {\bf (Inv-III)} and the matching score threshold.
They also suggest that the attack works effectively even when the collected traffic traces contain Tor client traffic (which is included in the real-world traces), not only unrelated onion service traffic.
Second, when we further incorporate the synthesized trace pairs, the results remain unchanged from the previous open setting.
Since all normal-condition OoI traces are successfully filtered out, no false positives are introduced.

\begin{table}[t]
    \setlength{\tabcolsep}{2pt}
    \renewcommand{\arraystretch}{1.15} 
    \caption{Performance of Puzzle Side-Channel Attack on AWS}
    \label{tab:perf-aws}
    \scriptsize
    \centering
    \begin{tabular}{cc cc cc}
        \toprule
        & \multirow{2}[2]{*}{\textbf{$|$Subset$|$}} & \multicolumn{2}{c}{\makecell{\textbf{Asynchronized DoS Attacks}\\(Realistic Scenarios)}} & \multicolumn{2}{c}{\makecell{\textbf{Synchronized DoS Attacks}\\(Contrived, Worst-Case Scenarios)}} \\
        \cmidrule(lr){3-4} \cmidrule(lr){5-6}
        & & \textbf{Precision} & \textbf{Recall} & \textbf{Precision} & \textbf{Recall} \\
        \midrule
        
        \multirow{2}{*}{\rotatebox{90}{LK$^*$}} 
        & 2 & 100\% (1$/$1) & 100\% (1$/$1) & 100\% (1$/$1) & 100\% (1$/$1) \\
        & 4 & 100\% (2$/$2) & 100\% (2$/$2) & 100\% (1$/$1) & 50\% (1$/$2) \\
        
        \midrule

        PK$^{**}$ & 8 & \multicolumn{2}{c}{100\% (8/8)$^{***}$} & \multicolumn{2}{c}{100\% (8$/$8)$^{***}$} \\
        
        \bottomrule
        
        \multicolumn{6}{l}{\footnotesize $^{*}$Limited Knowledge.\quad\footnotesize $^{**}$Perfect Knowledge.\quad $^{***}$Accuracy.}
    \end{tabular}
\end{table}

\vspace{3pt}
\para{Different hosting environment evaluation.}
We also evaluate the impact of the hosting environment on the effectiveness of the puzzle side-channel attack.
Similar to \S\ref{sec:traffic-inducing}, we repeat the evaluation on the AWS North Virginia network using EC2 c5a.4xlarge instances.
Table~\ref{tab:perf-aws} summarizes the results.
The adversary achieves 100\% precision and accuracy regardless of subset size, showing a trend highly similar to the main results.
The mean relative score margins are even higher than those in the main results under the perfect-knowledge setting: 72.1\% and 64.9\% under asynchronized and synchronized DoS attacks, respectively.
In contrast, they are much smaller under the limited-knowledge setting: 24.6\% and 6.88\%.
This difference is likely due to the smaller sample size of this evaluation (8 vs. 55).
Based on these results, we conclude that, similar to the traffic-inducing attacks, the strong performance of the puzzle side-channel attack does not depend on the hosting environment.
\section{On Effective Mitigation Strategies}
\label{sec:mitigation}
Mitigating the proposed NetFlow-based deanonymization attacks, unfortunately, appears to be challenging.
Since removing access to large-scale, coarse-grained traffic telemetry (e.g., through legal or regulatory measures) at a global scale would be impractical, we focus on technical mitigations under the assumption that such large-scale telemetry is accessible to sufficiently capable adversaries.

\subsection{Traffic Obfuscation}
\label{sec:obfs}
Traffic obfuscation is perhaps a go-to defense against traffic analysis attacks in Tor~\cite{tian21universal}.
The Tor Project maintains an official obfuscator, obfs4~\cite{tor26lyrebird}, designed to circumvent censorship and mitigate traffic analysis attacks by padding packet sizes and introducing variability in packet inter-arrival times.
In the literature, Lopes et al.~\cite{lopes24flow} reported reductions of up to 20 percentage points in true positive rates, with similar performance degradation observed by Oh et al.~\cite{oh22deepcoffea} and Nasr et al.~\cite{nasr18deepcorr}.
Yet, existing traffic obfuscation techniques are entirely ineffective against the proposed NetFlow-based attacks because they modify traffic features at a fine-grained level (e.g., packet sizes, timing) that the NetFlow-based attacks do not exploit.

We confirm the ineffectiveness of traffic obfuscation through small-scale experiments on the live Tor network.
Specifically, we re-evaluate the effectiveness of the active download-inducing attack and the passive puzzle side-channel attack with obfs4 enabled.
As a pluggable transport designed for censorship circumvention~\cite{tor26circumvention}, obfs4 requires clients to connect through Tor bridges, specialized Tor relays whose information is not publicly listed~\cite{tor26bridge}.
We {\em opt not} to use existing bridges in our experiments, as they have limited resources compared to standard relays~\cite{tor26relay}, but instead set up our own bridges to avoid ethical concerns.
Additionally, to accommodate the constrained network conditions introduced by the use of our own bridges, we extend the experiment duration from 25 hours to around 37 hours for the passive attack evaluation.
This adjustment is informed by the observation that variability in suggested difficulty diminishes as attack traffic is funneled through a few Tor entry nodes (i.e., the 16 bridges).

\begin{figure}[!t]
    \centering
    \includegraphics[width=\linewidth]{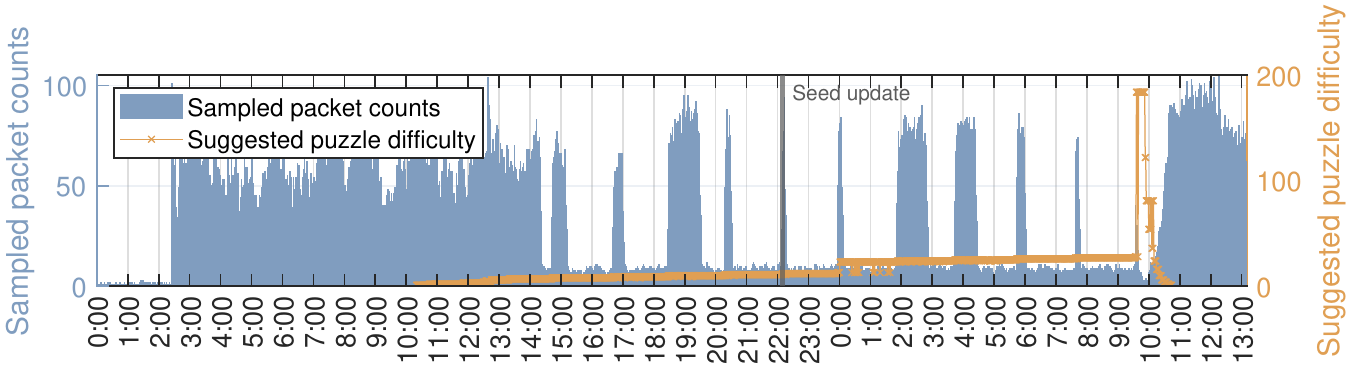}
    \caption{Sampled packet counts and suggested difficulty levels per minute, when obfs4 is deployed.}
    \label{fig:obfs}
\end{figure}

Our experiments confirm that, even with obfs4 deployed, the download-inducing attack successfully identifies the Onion of Interest (OoI) traffic trace among 12,041 candidate traces across all 31 windows, while the puzzle side-channel attack successfully identifies all (eight) onion-IP address pairs.
This result is further supported by Figure~\ref{fig:obfs}, which shows the sampled packet counts and suggested puzzle difficulty under obfs4.
One can observe that the sampled packet counts exhibit a clear semantic correlation with the suggested difficulty levels, as we observed in Figure~\ref{fig:examples} earlier. 
We therefore conclude that existing fine-grained traffic obfuscation is ineffective at mitigating NetFlow-based attacks.

\subsection{Detecting Induced Traffic Patterns}
\label{sec:detect}
At first glance, detecting the active attacks may seem straightforward because the target OoI has access to traffic at a finer granularity than the adversary.
Unlike the adversary, who observes only sampled and aggregated packet counts, the target OoI has access to unsampled raw traffic.\footnote{We empirically confirm that an OoI can detect known patterns with high accuracy, but omit the results for brevity.}

However, the adversary has a subtle but crucial advantage: she knows both the induced traffic pattern and the aggregation timing, whereas this information is {\em unknown} to the target OoI.
Specifically, the adversary is free to choose arbitrary traffic patterns (or codes) and attack timing.
Thus, while the adversary knows exactly which pattern to correlate and when to look for it, the OoI observes only its own traffic without knowing which pattern or timing to test.

This asymmetry creates a tradeoff between attack duration and detectability.
By inducing weaker signals over a longer period, the adversary can make the attack less distinguishable from ordinary background variation while still accumulating sufficient correlation evidence.
For example, a week-long or month-long low-intensity attack may be difficult for the OoI to distinguish from normal traffic fluctuations, even though an adversary correlating the trace with the correct code at the correct timing can still identify it reliably.
Consequently, detection alone may be insufficient to effectively mitigate active NetFlow-based deanonymization attacks, and additional measures are required.

\subsection{Disabling the Puzzle Side Channel}
\label{sec:passive-mitigation}
We explore several plausible strategies for significantly disabling (or even removing) the puzzle-based side channel, including artifact removal, time-domain adjustment, noisy difficulty adjustment, and coarser-grained difficulty adjustment.
Disabling this side channel is challenging because the relationship between traffic load and puzzle difficulty (i.e., the side channel itself) is fundamental to the DoS defense mechanism, and many strategies result in either weak mitigation effectiveness or weakened DoS protection.
Among these strategies, coarser-grained difficulty adjustment appears to balance DoS protection with side-channel mitigation.

\subsubsection{Removal of artifacts}
Since the artifacts observed earlier arise from specific design choices in Tor rather than from fundamental invariants, one seemingly reasonable mitigation strategy is to eliminate such artifacts.
Thus, in all subsequent experiments, we eliminate {\bf (Art-I)} and {\bf (Art-II)} by removing the constraints on difficulty broadcasting.
Our findings show that, even without these artifacts, the passive puzzle side-channel adversary can deanonymize onion services with high accuracy and precision by utilizing invariants.
We omit the experimental results that isolate artifact removal for brevity.

\subsubsection{Time-domain adjustment}
Given that artifact removal alone is insufficient, another natural direction is to reduce the responsiveness of difficulty adjustment by updating it less frequently; e.g., once per hour instead of every five minutes.
Yet, this approach has a clear drawback.
That is, it significantly weakens the effectiveness of puzzles as a DoS defense.
In particular, slower updates to the suggested difficulty prolong the system's reaction time to introduction-flooding, resulting in longer waiting times for clients and an increased window of opportunity for DoS attackers.
Understanding this clear limitation, we omit the empirical evaluation of this approach and instead focus on strategies that preserve the responsiveness of difficulty adjustment.

\subsubsection{Noisy difficulty adjustment}
\label{sec:noisy}
One strategy that preserves the responsiveness of difficulty adjustment is to obfuscate the suggested difficulty by injecting noise into it, a well-established countermeasure against side-channel attacks across various domains~\cite{osvik06cache,guneysu11generic,wang13role}.
In this approach, onion services use the existing difficulty calculation algorithm but modify the broadcast difficulty to obscure the true traffic conditions and confuse potential adversaries.
For example, an onion service samples a random value from a uniform distribution $U(0, 30)$ and adds it to the internally computed difficulty when broadcasting.
We deliberately choose this noise range for two reasons: first, restricting the noise to positive values preserves the effectiveness of puzzles as a DoS defense; second, capping the maximum noise at 30 limits the additional burden on clients, resulting in an average waiting time increase of 0.3 seconds, and at most 1.36 seconds even on resource-constrained mobile devices (see Appendix~\ref{app:theory} for details).
Using the same experiment setup described in \S\ref{sec:passive-eval}, we evaluate the effectiveness of this approach on the live network.

\begin{table}[t]
    \setlength{\tabcolsep}{2pt}
    \renewcommand{\arraystretch}{1.15} 
    \caption{Results of Noisy Difficulty Adjustment}
    \label{tab:noise}
    \scriptsize
    \centering
    \begin{tabular}{cc cc cc}
        \toprule
        & \multirow{2}[2]{*}{\textbf{$|$Subset$|$}} & \multicolumn{2}{c}{\makecell{\textbf{Asynchronized DoS Attacks}\\(Realistic Scenarios)}} & \multicolumn{2}{c}{\makecell{\textbf{Synchronized DoS Attacks}\\(Contrived, Worst-Case Scenarios)}} \\
        \cmidrule(lr){3-4} \cmidrule(lr){5-6}
        & & \textbf{Precision} & \textbf{Recall} & \textbf{Precision} & \textbf{Recall} \\
        \midrule
        
        \multirow{6}{*}{\rotatebox{90}{{Limited Knowledge}}} 
        & 5  & 100\% (2/2)   & 100\% (2/2)    & 100\% (1/1)     & 50\% (1/2)   \\
        & 10 & 100\% (3/3)   & 100\% (3/3)    & 66.7\% (2/3)    & 66.7\% (2/3) \\
        & 15 & 100\% (6/6)   & 100\% (6/6)    & 83.3\% (5/6)    & 83.3\% (5/6) \\
        & 20 & 100\% (9/9)   & 100\% (9/9)    & 72.7\% (8/11)   & 88.9\% (8/9) \\
        & 30 & 100\% (14/14) & 93.3\% (14/15) & 60\% (12/20)    & 80\% (12/15) \\
        & 40 & 100\% (26/26) & 92.9\% (26/28) & 63.6\% (21/33)  & 75\% (21/28) \\
        
        \midrule
        
        PK$^*$ & 55 & \multicolumn{2}{c}{94.6\% (52/55)$^{**}$} & \multicolumn{2}{c}{76.4\% (42/55)$^{**}$} \\
        
        \bottomrule
        
        \multicolumn{6}{l}{\footnotesize $^*$Perfect Knowledge.\quad $^{**}$Accuracy.}
    \end{tabular}
\end{table}

Unfortunately, however, our results show that noise injection is largely ineffective.
In fact, the added noise introduces a new artifact enabling accurate trace matching, as we discuss next.
Table~\ref{tab:noise} presents the performance of the puzzle side-channel attack with noise injection.
While the attack's overall effectiveness declines across settings, its precision (in the limited-knowledge setting) and accuracy (in the perfect-knowledge setting) remain (near-)perfect under asynchronized DoS attacks.

We attribute the limited effectiveness of the noise-based approach to two main factors.
First, noise affects not only puzzle side-channel adversaries but also DoS attackers.
Because DoS attackers adapt their behavior based on the broadcast difficulty, the fundamental correlation between traffic load and suggested difficulty persists.
Second, this mitigation strategy introduces a new artifact, {\bf (Art-III)}, caused by overly frequent difficulty updates.
Under normal conditions, an onion service uploads its descriptor to multiple directory servers to ensure availability.
However, if descriptor updates occur too frequently, some directory servers may receive the updated descriptors while others retain outdated ones.
This inconsistency leads to divergent difficulty values across directories, which in turn can reduce traffic volume via the same mechanism that caused {\bf (Art-I)}.
Although such inconsistencies are rare under standard operation, they become increasingly common when the noise-based strategy triggers frequent descriptor updates; see Appendix~\ref{app:formal} for a formal definition.

The re-optimized parameters for noisy difficulty adjustment are:
\{$\tau_{InvI}$, $\tau_{InvII}$, $\tau_{ArtIII}$, $\tau_{reject}$, $w_{InvI}$, $w_{InvII}$, $w_{ArtIII}$\} = \{$72.79$, $5.84$, $17.33$, $16.23$, $0.79$, $0.20$, $0.51$\}.
We use updated parameter bounds for $\tau_{InvI}$ and $\tau_{ArtIII}$: (70, 90) and (10, 30), respectively.

\subsubsection{Coarser-grained difficulty adjustment}
\label{sec:coarser}
Like the noisy difficulty adjustment strategy, this approach retains the existing difficulty calculation algorithm but modifies the broadcasted difficulty to be coarser-grained.
Importantly, it also preserves the effectiveness of the puzzle mechanism as a DoS defense.
In this strategy, the onion service always rounds up its internally computed suggested difficulty to the nearest multiple of fifteen before broadcasting it.\footnote{The value fifteen is chosen empirically.}
Since the maximum difference between the broadcast and internal difficulty is {\em capped at} 14, the added delay for clients on resource-constrained mobile devices is limited to an average of 0.14 seconds and 0.63 seconds in the worst case (see Appendix~\ref{app:theory} for the estimation process).

\begin{table}[t]
    \setlength{\tabcolsep}{2pt}
    \renewcommand{\arraystretch}{1.15} 
    \caption{Results of Coarser-Grained Difficulty Adjustment}
    \label{tab:mult}
    \scriptsize
    \centering
    \begin{tabular}{cc cc cc}
        \toprule
        & \multirow{2}[2]{*}{\textbf{$|$Subset$|$}} & \multicolumn{2}{c}{\makecell{\textbf{Asynchronized DoS Attacks}\\(Realistic Scenarios)}} & \multicolumn{2}{c}{\makecell{\textbf{Synchronized DoS Attacks}\\(Contrived, Worst-Case Scenarios)}} \\
        \cmidrule(lr){3-4} \cmidrule(lr){5-6}
        & & \textbf{Precision} & \textbf{Recall} & \textbf{Precision} & \textbf{Recall} \\
        \midrule
        
        \multirow{6}{*}{\rotatebox{90}{{Limited Knowledge}}} 
        & 5 & 20\% (1$/$5) & 100\% (1$/$1) & 20\% (1$/$5) & 100\% (1$/$1) \\
        & 10 & 37.5\% (3$/$8) & 100\% (3$/$3) & 20\% (2$/$10) & 66.7\% (2$/$3) \\
        & 15 & 40\% (6$/$15) & 100\% (6$/$6) & 33.3\% (5$/$15) & 83.3\% (5$/$6) \\
        & 20 & 42.1\% (8$/$19) & 88.9\% (8$/$9) & 35\% (7$/$20) & 77.8\% (7$/$9) \\
        & 30 & 51.7\% (15$/$29) & 93.8\% (15$/$16) & 42.9\% (12$/$28) & 75\% (12$/$16) \\
        & 40 & 70\% (28$/$40) & 96.6\% (28$/$29) & 46.2\% (18$/$39) & 62.1\% (18$/$29) \\
        
        \midrule
        
        PK$^*$ & 55 & \multicolumn{2}{c}{92.6\% (50$/$54)$^{**}$} & \multicolumn{2}{c}{57.4\% (31$/$54)$^{**}$} \\
        
        \bottomrule
        
        \multicolumn{6}{l}{\footnotesize $^*$Perfect Knowledge.\quad $^{**}$Accuracy.}
    \end{tabular}
\end{table}

Using the same experiment settings described in \S\ref{sec:passive-eval}, we evaluate the effectiveness of this coarser-grained difficulty adjustment strategy on the live network.
Our results suggest that this approach has the potential to serve as an effective defense against the puzzle side-channel attack.
Table~\ref{tab:mult} summarizes the evaluation results.
Notably, precision drops sharply across all settings, while recall degrades slightly under asynchronized DoS attacks and more substantially under synchronized attacks.
The low precision indicates that this strategy can effectively mitigate the practical deanonymization attack by introducing a large number of false positives for the adversary.
Meanwhile, the high accuracy (near 93\%) observed for asynchronized DoS attacks under the perfect-knowledge setting suggests that the onset timing of DoS attacks remains a critical signal for accurate trace matching in the puzzle side-channel attack.
This emphasizes the need to obscure such timing information; one possible direction is for onion services to periodically broadcast non-zero difficulty values even in the absence of DoS attacks, at randomized intervals.

The re-optimized parameters for coarser-grained difficulty adjustment are:
\{$\tau_{InvI}$, $\tau_{InvII}$, $\tau_{reject}$, $w_{InvI}$, $w_{InvII}$\} = \{$77.07$, $11.67$, $1.09$, $0.97$, $0.99$\}.
We use an updated parameter bound for $\tau_{InvI}$: (70, 90).

\subsection{Rotating IP Addresses}
\label{sec:rotation}
One alternative, orthogonal to all previous mitigations, is to restrict the information that an adversary can obtain from such monitoring via moving target defense~\cite{jajodia11moving,sengupta20survey}.
Specifically, onion services could rotate their IP addresses periodically and in response to detected suspicious traffic patterns, preventing adversaries from collecting sufficient telemetry.
This periodic-and-responsive rotation could potentially mitigate both active and passive attacks.
For short, high-intensity active attacks, IP rotation upon detection would limit attack effectiveness.
For long, low-intensity active attacks and passive attacks, periodic rotation would limit attack effectiveness.

The specific mechanism of IP rotation may vary depending on the onion service's hosting environment.
Onion services hosted on cloud or hosting networks may be able to rotate their IP addresses relatively easily, while those hosted on local networks may have difficulty continuously obtaining new addresses.
In such cases, coordination with the local ISP or migration to a cloud or hosting provider might be beneficial.

The required rotation granularity depends on the adversary's monitoring strategies and scope.
Changing only a few bits (or even a single bit) of the IP address may be sufficient against an adversary that does not account for IP rotation.
A more sophisticated adversary, however, may counter such naive rotation by grouping addresses by subnet.
Rotating across substantially different address ranges would be required against such adversaries.

\subsection{Overall Suggestions}
Our findings suggest that (i) traditional traffic obfuscation would be largely ineffective; (ii) conspicuous active attacks can often be detected, but low-amplitude induction remains difficult to distinguish from natural traffic variation; (iii) disabling the puzzle side channel introduces a slight availability-anonymity trade-off; and (iv) IP rotation would help mitigate both active and passive NetFlow-based deanonymization attacks although it may require non-trivial operational effort.
Therefore, active-attack detection, disabling the puzzle side channel, and IP rotation would have to be considered together, as no single defense appears to fully mitigate NetFlow-based deanonymization attacks.
We have shared our implementations and experimental results with the Tor security team to support their evaluation of potential mitigations.
\section{Discussion}
\label{sec:discussion}

\subsection{Limitations}
Although our experiments are conducted on the live Tor network, they do not capture the full diversity of onion service deployments and attack conditions in the wild.
For ethical reasons, our evaluations rely primarily on a controlled experiment setup with onion services and adversarial infrastructure under our control, and we derive NetFlow-like records from packet traces rather than collecting actual NetFlow exports from a hardware router.
Real onion services may differ in resources, hosting environments, locations, and application behavior; router-specific NetFlow implementations may also introduce additional artifacts not captured in our setup.
We incorporate real-world Tor traffic traces ethically collected at our guard relay to improve realism, but these traces still cannot fully represent all deployment and traffic characteristics of real onion services.
Therefore, our results should be interpreted as demonstrating the feasibility and practicality of NetFlow-based onion service deanonymization by a network adversary, rather than as a precise measurement of deanonymization effectiveness across the entire Tor network.

\begin{table*}[t]
    \setlength{\tabcolsep}{4pt}
    \renewcommand{\arraystretch}{1.20} 
    \caption{Effectiveness of the Studied NetFlow-Based Attacks under Varying Sampling Rates}
    \label{tab:sampling-rates}
    \footnotesize 
    \centering
    \begin{tabular}{lccccc}
        \toprule
        & \textbf{1:500} & \textbf{1:1000 (Main)} & \textbf{1:1500} & \textbf{1:2000} & \textbf{1:5000} \\
        \midrule
        \textbf{Download-Inducing$^{*}$} & 100\% (97$/$97) & 100\% (97$/$97) & 100\% (97$/$97) & 100\% (97$/$97) & 95.9\% (93$/$97) \\
        \textbf{Low-Rate Introduction-Inducing$^{*}$} & 98\% (96$/$98) & 98\% (96$/$98) & 96.9\% (95$/$98) & 94.9\% (93$/$98) & 66.3\% (65$/$98) \\
        \textbf{Puzzle Side-Channel$^{**}$} & 98.2\% (54$/$55) & 96.4\% (53$/$55) & 92.7\% (51$/$55) & 83.6\% (46$/$55) & 61.8\% (34$/$55) \\
        \bottomrule
        \multicolumn{6}{l}{\footnotesize $^{*}$Large-scale, open evaluation.\quad$^{**}$Accuracy under perfect knowledge and synchronized DoS attacks.} \\
    \end{tabular}
\end{table*}

\subsection{Different Attack Configurations}
In this study, we focus on 1:1000 sampling, one-minute aggregated packet counts, and Pearson correlation to demonstrate a low-cost, low-complexity attack scenario under common NetFlow settings.
This configuration emphasizes the practicality of the attacks: (i) a 1:1000 sampling rate is commonly suggested by the operator community~\cite{popa23what,petryschuk24netflow,cloudflare25recommended,fastnetmon25fastnetmon}, (ii) one-minute aggregation is a default or recommended reporting interval~\cite{petryschuk24netflow,juniper24netflow}, (iii) packet count is the simplest feature representation, and (iv) Pearson correlation is one of the simplest correlation techniques.
In this section, we discuss other, less common attack configurations, especially those using different sampling rates.

In general, higher sampling rates (i.e., capturing more packets) provide more frequent observations and thereby improve attack effectiveness.
We empirically confirm this trend by varying the sampling rate among 1:500, 1:1000 (our main configuration), 1:1500, 1:2000, and 1:5000.
Table~\ref{tab:sampling-rates} summarizes the performance of the three attacks, download-inducing, low-rate introduction-inducing, and puzzle side-channel attacks, under these sampling rates.
For the sake of brevity, for the active attacks, we report only deanonymization success rates from the large-scale, open evaluation; for the passive attack, we report only accuracy under perfect knowledge and synchronized DoS attacks.
All three attacks exhibit a clear trend of decreasing performance as the sampling rate decreases, while the download-inducing attack remains highly accurate even at a sampling rate of 1:5000.
These results suggests that the sampling rate can be chosen based on the adversary's available resources and desired attack accuracy.
Future work may explore other configurations, including different aggregation intervals, feature representation, and correlation techniques.

\subsection{Timing Imperfections}
Timing imperfections may arise from multiple sources over the course of the attacks.
In this section, we discuss how such imperfections are accounted for.
First, although differences in device clocks may disrupt traffic monitoring by distorting traffic aggregations, their impact is limited because the attacks require only minute-level granularity and the adversary is assumed to control centralized network monitoring infrastructure; see \S\ref{sec:threat-model}.
Second, our semantic matching rules account for variable delays between changes in traffic load and the publication of corresponding difficulty levels, as well as the five-minute puzzle update cycle; see Appendix~\ref{app:formal}.
Last, propagation delays introduced by directory servers are inherently included in our evaluations, as we fetch descriptors from many real Tor directory servers on the live network; see \S\ref{sec:passive-eval}.

\subsection{Opportunistic Nature of the Passive Attack}
\label{sec:requirements}
The puzzle side-channel attack is passive and opportunistic: it deanonymizes OoIs when they are under DoS attacks launched by any party, not necessarily by the adversary.
Although this may require waiting until suitable DoS conditions arise, we argue that this does not significantly limit the practicality of the attack for the following reasons.

First, DoS attacks against onion services are prevalent.
Many incidents have been reported over the years~\cite{brewster15tor,cimpanu19tor,kadianakis20how,bagueros23tor}, and Tor's adoption of client puzzles was indeed prompted by an increasing number of DoS attacks on onion services.
We conducted a preliminary measurement study to confirm the prevalence of DoS attacks on the current Tor network.
From May 30 to November 2, 2025, we observed real-world puzzle difficulty levels for twelve well-known onion services that had enabled the puzzle defense, either ranked in the top five of their category on the Tor Link search engine~\cite{tor26link} or identified manually.
Among these services, five exhibited non-zero difficulty at least once, and two showed clear variation in difficulty: one ranged from 1,000 to 12,000 with a standard deviation of 2,206, and the other ranged from 0 to 400 with a standard deviation of 146.

Second, the puzzle side-channel attack can be used as an always-on surveillance tool for the adversary that controls network infrastructure (e.g., ISPs, nation-state adversaries) and conducts long-term deanonymization campaigns, in which case the opportunistic nature is less of a concern.

\subsection{Considerations for Enhanced Attacks}
\para{Bridges.}
Potential limitation of the proposed deanonymization attacks is the existence of non-public relays, known as Tor bridges~\cite{tor26bridge}.
During the reconnaissance phase, some Tor-speaking IP addresses may be overlooked if they communicate exclusively through bridges, whose information is kept secret{; or, some bridges may be misclassified as non-relay Tor-speaking addresses}.
At first glance, this may seem to pose a significant challenge.
However, historical efforts by authoritarian states to uncover bridges~\cite{ling12extensive,wilde12knock,dunna18analyzing} suggest that even these addresses could eventually be discovered and incorporated into the adversary's monitoring list.
Additionally, for onion services, relying solely on bridges may result in degraded service quality due to bridges' limited bandwidth and computational resources~\cite{tor26relay}.

\vspace{3pt}
\para{Repetitive attacks for higher accuracy.}
The adversary can repeat the attacks to improve accuracy.
Our current evaluation infers a linkage between an onion address and an IP address from a single observed trace match.
In practice, the adversary could monitor and match traces repeatedly, increasing confidence in onion-IP associations that recur across multiple observations~\cite{wails24precisely}.

\subsection{Large-Scale Feasibility of Proposed Attacks}
\label{sec:large-scale-discussion}
In this section, we extend the discussion of the large-scale feasibility of the proposed attacks from \S\ref{sec:large-scale-arguments} to puzzle data collection and actual target deanonymization (i.e., Phase-\ding{204}).
Again, to assess the resource requirements of attacks at the largest scale, we assume that an adversary aims to deanonymize as many major onion services as possible.
Consequently, we assume that the adversary monitors roughly 30,000 candidate IPs and 30,000 onion addresses.

First, the cost of collecting difficulty traces is modest.
A passive adversary retrieves the suggested difficulty (and optionally seeds) for each onion service from its public service descriptors, which can be fetched via queries to Tor directory servers.
In case of targeting 30,000 onion services, the adversary can distribute the queries across all (roughly 5,000) active Tor directory servers~\cite{tor26onionoo} to avoid overloading any single server.
This requires initially building 5,000 Tor circuits and issuing 30,000 queries per minute.
Based on our measurements of RAM usage and CPU time for circuit creation and descriptor retrieval, we estimate that building 5,000 circuits requires around 70~MB of RAM and 300 seconds of CPU time, while fetching 30,000 descriptors requires about 480~MB of RAM and 2,160 seconds of CPU time.
Also, storing difficulty traces for 30,000 onion services would require 780~KB/minute, 1~GB/day, or 34~GB/month.
Therefore, both computational and storage requirements are well within the capabilities of a large ISP.

Second, the time required for actual target deanonymization is also modest.
For active attacks, the adversary needs to compute Pearson correlation coefficients for approximately 30,000 trace pairs.
Based on our measurements, and assuming a one-hour trace duration, this computation takes less than a minute even on a single core of a server-grade CPU (base frequency 2.4~GHz, max 3.5~GHz).
For passive attacks, the adversary needs to semantically match 30,000 traffic traces with 30,000 difficulty traces, resulting in 900 million trace pairs to evaluate.
The adversary would wish to complete a full semantic matching cycle (i.e., all 900 million evaluations) within 25 hours for continuous monitoring and matching.
Based on our measurements, and assuming a 25-hour trace duration, this computation takes about 620 minutes on eight CPU cores.
Here, the length of the traces has minimal impact on feasibility, as the trace duration aligns with the time available for processing.
Therefore, we conclude that target deanonymization is also readily feasible within a practical amount of time at ISP scale.

\subsection{Applicability to Other Anonymity Systems}
This work focuses on deanonymizing Tor onion services, given Tor's widespread use and official adoption of client puzzles.
Yet, our findings likely extend to other anonymity networks that provide receiver anonymity, such as the Invisible Internet Project (I2P)~\cite{i2p26the}.
The proposed attacks do not exploit Tor-specific design flaws or implementation bugs; rather, it leverages fundamental side channels, the traffic volume (for active attacks) and the correlation between traffic volume and suggested puzzle difficulty (for passive attacks).
Therefore, the active attacks would be already effective against anonymous services in other anonymity networks, and passive attacks would become effective once those networks introduce client puzzles.
Our findings suggest that anonymity networks should carefully evaluate the privacy implications of NetFlow-based deanonymization attacks.
\section{Conclusion}
\label{sec:conclusion}
We revisit and extend the largely neglected threat model of NetFlow-only deanonymization, revealing its renewed threats to Tor onion services.
We demonstrate that active deanonymization of onion services using sampled NetFlow is not only feasible but also practical, and that even NetFlow-based passive deanonymization is possible.
Our findings suggest that anonymity networks should carefully re-evaluate the threat that this long-standing adversarial capability poses to anonymous service providers and seek comprehensive mitigation strategies.
\section*{Ethical Considerations}
\label{app:ethics}
First, the computational and memory overhead imposed on Tor relays is minimal, as our experiments target onion services under our direct control, with Tor relays serving only as intermediate nodes along the communication path.
Also, the load is distributed across 20--160 introduction points, which further mitigates the impact.
The peak connection rate per relay in our experiments never exceeds 16 simultaneous connections, which is only 1.1\% of the minimum required capacity (i.e., 7,000) for Tor relays~\cite{tor26relay}.

Second, the total traffic generated by our experiments remains low.
A single day-long experiment may generate up to 16,032,000 requests, corresponding to a maximum total volume of approximately 180~GB, which is distributed across 160 introduction points.
Assuming the experiment is repeated fifteen times in a month, the total traffic per relay would amount to 0.8\% of the recommended monthly capacity (2~TB)~\cite{tor26relay}.

Last, the potential impact on relays acting as Tor directory servers is minimal.
Each heavy experiment performs periodic descriptor fetching at one-minute intervals for eight onion services, resulting in at most 8 simultaneous connections (0.11\% of the minimum required relay capacity~\cite{tor26relay}) and approximately 12,000 fetch requests, generating less than 135~MB of traffic (0.007\% of the recommended monthly capacity~\cite{tor26relay}).

In addition to the resources consumed by the relays, we take several practical precautions to minimize any potential negative impact on the Tor network during our experiments.
(i) We deploy our own onion services and configure them to use as many introduction points as possible (20 in our experiments) to effectively distribute the generated traffic.
(ii) When assessing the effectiveness of traffic obfuscation as a mitigation strategy (i.e., in \S\ref{sec:obfs}), we operate our own Tor bridges.
This is particularly important because Tor bridges generally have significantly fewer resources than typical middle relays~\cite{tor26relay}.
Although this approach extends the duration of the experiments due to the traffic convergence issue, we consider it a necessary trade-off to avoid burdening legitimate Tor bridges.
(iii) We limit the number of intensive experiments to fifteen per month, consistent with the assumption made in our analytic evaluations.

\para{Traffic collection.}
Before operating a public relay and collecting one-minute-aggregated, sampled, and pseudonymized traffic traces, we consulted the Tor Research Safety Board~\cite{tor26research}.
We shared our detailed collection plan, including ethical considerations, and the board reviewed our procedures.
Based on their feedback, we revised our plan to address all concerns raised.
After four rounds of revisions, we obtained approval from the board and proceeded with our experiments, strictly following the approved plan.

The following presents our hypothetical worst-case risk analysis.
In the worst-case scenario, where an adversary gains access to our one-minute-aggregated packet counts (e.g., through a data breach), our data collection would pose no practical harm to the Tor network or its users.
For an adversary to correlate the obtained packet counts, they would need access to puzzle difficulty data from onion services at a fine temporal granularity (at least once per minute).
To our knowledge, no mechanism exists for retrospectively collecting such puzzle difficulty data from arbitrary onion services, making it challenging for an adversary to obtain the required data after the fact.

Even if an adversary were somehow able to collect the necessary puzzle difficulty data, the resulting output would consist only of {\em pseudonymized} onion-IP address linkages, which cannot be deanonymized without the pseudonymization key.
This key is used solely during raw data collection, sampling, and aggregation, and is discarded immediately after the collection period ends.
No copies of the key are stored on disk or transmitted elsewhere.
Although an adversary might infiltrate our storage server to obtain the collected traffic traces, it is highly unlikely that they could also obtain the key, because the window of opportunity is extremely narrow (existing only during the active collection period).
Accordingly, we assess the risk of IP deanonymization as minimal.

We acknowledge that the risk is non-zero, though.
A highly sophisticated, targeted adversary could potentially spear-phish our relay operator during the collection period to obtain the key, thereby acquiring both the data and the key needed to deanonymize IP addresses.
However, we consider this scenario extremely unlikely.
An adversary capable of compromising the relay operator during the experiment would likely already be able to access raw, unsampled, and non-anonymized packet data directly, bypassing the need for the key entirely.
Alternatively, such a resourceful adversary could operate their own entry guard relays to collect detailed raw data independently.

Building on this worst-case risk analysis, we also present a harm-benefit analysis.
As explained above, we believe that the risks associated with this data collection are minimal.
Even in a rational worst-case scenario, an adversary could obtain only {\em pseudonymized} onion-IP address linkages, without any practical means to deanonymize the addresses.
In contrast, the benefits are substantial.
This dataset enables a rigorous evaluation of the practicality of NetFlow-based deanonymization attacks against onion services.
In turn, our results provide the community with an informed assessment of the deanonymization risks posed by existing traffic telemetry equipment (i.e., sampled NetFlow), an area that has received little attention.

\para{Responsible disclosure.}
In August 2024, we shared the details of our experiments and the source code for launching introduction-flooding attacks with the Tor security team.
As a result, the team has been fully aware that we generate introduction-flooding traffic against our own onion services on the live Tor network via public relays, under the carefully controlled traffic conditions described in this section.
Since then, we have been in continuous communication with the team, and no ethical concerns have been raised to date.

Regarding mitigation efforts, the Tor security team received our first report on June 6, 2025, and acknowledged it.
The team subsequently asked several clarification questions on June 10 and 12, to which we responded on June 19.
More recently, we shared additional datasets and source code on June 12, 2026.
A shorter version of this paper was also sent to the Tor security team on June 12, while the current version was sent on September 10, before being publicly archived on October 16.
The Tor security team was informed of our plan to archive the paper on September 10 and had not raised any objection as of October 16.
\section*{Open Science}
\label{app:open-science}
Our main research artifacts consist of two components: (1) active traffic inducing attacks and (2) the passive puzzle side-channel attack.
Because unauthorized sharing or leakage of our artifacts could cause serious harm to the Tor network by enabling immediate DoS or deanonymization attacks against numerous onion services, we will not open-source them.
Instead, we selectively shared the artifacts exclusively with legitimate reviewers for the sole purpose of facilitating a thorough evaluation of our work.\footnote{For responsible disclosure, we have already shared the artifacts with the Tor security team.}



\bibliographystyle{IEEEtran}
\bibliography{main}

\appendices
\section{Formal Definitions}
\label{app:formal}
In this section, we formally define each invariant or artifact and the matching score.
Regarding notation, we denote the suggested difficulty at time $i$ as $D^{t}_{i}$ and the sampled packet count at time $i$ as $T^{t}_{i}$.
Similarly, we denote the suggested difficulty at round $i$ as $D^{r}_{i}$ and the sampled packet count at round $i$ as $T^{r}_{i}$.
Here, a {\em round} refers to a five-minute time slot between two suggested difficulty updates.
Since the adversary records the suggested difficulty levels and sampled packet counts every minute, each round consists of five difficulty values and five packet count estimates.
By definition, for round $i$ covering time slots between $j$ and $j+4$, $D^{r}_{i}$ is defined as the mode of $D^{t}_{j}, D^{t}_{j+1}, ..., D^{t}_{j+4}$, and $T^{r}_{i}$ is calculated as $\sum^{j+4}_{k=j} T^{t}_{k}$.
Furthermore, this notion of a round is used to normalize the matching score, ensuring comparability across all candidate matches.

The definitions of {\bf (Inv-I)} and {\bf (Inv-II)} are straightforward: we introduce two thresholds, $\tau_{InvI}$ and $\tau_{InvII}$.
\begin{definition}
    For two consecutive rounds $i$ and $i+1$, if $D^{r}_{i+1} > D^{r}_{i}$ and $T^{r}_{i} \geq \tau_{InvI}$, we say that rounds $i$ and $i+1$ exhibit {(\bf Inv-I)}.  
\end{definition}

\begin{definition}
    For two consecutive rounds $i$ and $i+1$, if $D^{r}_{i+1} < D^{r}_{i}$ and $T^{r}_{i} \leq \tau_{InvII}$, we say that rounds $i$ and $i+1$ exhibit {(\bf Inv-II)}.  
\end{definition}

In contrast, {\bf (Inv-III)} is defined with constants.
\begin{definition}
    For ten consecutive time slots $i$, $i+1$, ..., $i+9$, if $T^{t}_{j} \leq 10$ for all $j$ in $[0, 9]$ or $T^{t}_{j} \leq 2.5$ for all j in $[i, i+9]$ where $i \geq 10$ and $D^{t}_{i+10} \neq 0$, we say that the given trace pair violates {(\bf Inv-III)}.
\end{definition}

We define {\bf (Art-I)} in a similar manner by introducing two thresholds, $\tau_{ArtI}^{low}$ and $\tau_{ArtI}^{high}$.
These thresholds serve as the floor and the ceiling, respectively.
Additionally, since each traffic trace may contain a different number of rounds satisfying these thresholds, we also define {\bf (Art-I-b)} (base value). 
Note that $s^{p,q}_{ArtI}$ is defined as the ratio of the number of observed {\bf (Art-I)} to the number of observed {\bf (Art-I-b)}: $\frac{c^{p,q}_{ArtI}}{{c^{p,q}_{ArtIb}}}$.
If $c^{p,q}_{ArtIb} = 0$, we define $s^{p,q}_{ArtI} = 0$.
\begin{definition}
    For three consecutive rounds $i$, $i+1$, and $i+2$, if $D^{r}_{i} = D^{r}_{i+1} = D^{r}_{i+2} \neq 0$ and $\tau_{ArtI}^{low} \leq T^{r}_{k} \leq \tau_{ArtI}^{high}$ for all $k \in \{ i, i+1, i+2 \}$, we say that rounds $i$, $i+1$, and $i+2$ exhibit {(\bf Art-I)}.
\end{definition}
\begin{definition}
    For a single round $i$, if $\tau_{ArtI}^{low} \leq T^{r}_{i} \leq \tau_{ArtI}^{high}$, we say that round $i$ exhibits {(\bf Art-I-b)}.
\end{definition}

In contrast to the above definitions, we use unit time slots to define {\bf (Art-II)} instead of rounds, as seed updates occur independently of suggested difficulty updates.
\begin{definition}
    For two consecutive time slots $i$ and $i+1$, if the seed value at time $i+1$ differs from that at time $i$ and $T^{t}_{i+1} \geq 2 \times T^{t}_{i}$, we say that time slots $i$ and $i+1$ exhibit {\bf (Art-II)}.
\end{definition}

Unlike {\bf (Art-I)}, {\bf (Art-III)} requires the introduction of a threshold parameter, $\tau_{ArtIII}$, due to several key differences.
For example, {\bf (Art-III)} tend to have much shorter average durations and occur far less frequently than {\bf (Art-I)}.
As with {\bf (Art-I)}, {\bf (Art-III)} requires the introduction of a threshold parameter, $\tau_{ArtIII}$.
\begin{definition}
    For two consecutive rounds $i$ and $i+1$, if $D^{r}_{i}=D^{r}_{i+1}$, $T^{r}_{i} \leq \tau_{ArtIII}$, and $T^{r}_{i+1} \leq \tau_{ArtIII}$, we say that rounds $i$ and $i+1$ exhibit {\bf (Art-III)}.
\end{definition}

We now formally define the matching score.
\begin{definition}
    For a difficulty trace $p$ and a traffic trace $q$, the matching score is the weighted sum of the occurrences of each invariant or artifact: $MS^{p,q}=c^{p,q}_{InvI} \times w_{InvI} + c^{p,q}_{InvII} \times w_{InvII} + s^{p,q}_{ArtI} \times w_{ArtI} + c^{p,q}_{ArtII} \times w_{ArtII}$.
\end{definition}

Here, $c^{p,q}_{InvI}$, $c^{p,q}_{InvII}$, and $c^{p,q}_{ArtII}$ denote the number of observed {\bf (Inv-I)}, {\bf (Inv-II)}, and {\bf (Art-II)}, respectively.
Similarly, $s^{p,q}_{ArtI}$ is defined as the ratio of the number of observed {\bf (Art-I)} to the number of observed {\bf (Art-I-b)}: $\frac{c^{p,q}_{ArtI}}{{c^{p,q}_{ArtIb}}}$.
If $c^{p,q}_{ArtIb} = 0$, we define $s^{p,q}_{ArtI} = 0$.
The weights for each term are denoted by $w_{InvI}$, $w_{InvII}$, $w_{ArtI}$, and $w_{ArtII}$, respectively.

With noisy difficulty adjustment (\S\ref{sec:noisy}), we extend the definition of the matching score to account for the new artifact, {\bf (Art-III)}.
Let $c^{p,q}_{ArtIII}$ denote the number of {\bf (Art-III)} observed and let $w_{ArtIII}$ represent its weight.
\begin{definition}
    For a difficulty trace $p$ and a traffic trace $q$, the matching score for noisy difficulty adjustment is the weighted sum of the occurrences of each invariant or artifact: $MS^{p,q}_{noise}=c^{p,q}_{InvI} \times w_{InvI} + c^{p,q}_{InvII} \times w_{InvII} + c^{p,q}_{ArtIII} \times w_{ArtIII}$.
\end{definition}

With coarser-grained difficulty adjustment (\S\ref{sec:coarser}), we revise the definition of the matching score, as all of the artifacts are completely eliminated.
\begin{definition}
    For a difficulty trace $p$ and a traffic trace $q$, the matching score for coarser-grained difficulty adjustment is the weighted sum of the occurrences of each invariant: $MS^{p,q}_{coarser}=c^{p,q}_{InvI} \times w_{InvI} + c^{p,q}_{InvII} \times w_{InvII}$.
\end{definition}
\section{Delay Estimation Process}
\label{app:theory}
Noisy or coarser-grained puzzle difficulty updates can create discrepancies between the actual suggested difficulty level and the broadcast level.
To estimate the additional delay introduced by inaccurate (i.e., increased) suggested difficulty, we compute the difference in the expected number of hash operations required to solve a puzzle at the true difficulty versus the incorrect one.
We assume the client is using an older, non-flagship mobile device with a CPU clock speed of approximately 1.5~GHz.

\begin{definition}
    A hash operation requires a single hast step. The client can perform hash operations at a rate of $\gamma$ steps per second.
\end{definition}

\begin{definition}
    A hash function is modeled as a random oracle. It behaves like a random function, but produces a consistent output for any given input.
\end{definition}

\begin{definition}
    A puzzle of difficulty $D$ requires finding a value $R$ such that $R \times D \leq 2^{32}$, where $R$ is the output of a hash operation. Both $R$ and $D$ are non-negative integers, with $R \in [0, 2^{32}-1]$ and $D \in [1, 2^{32}-1]$.
\end{definition}

\begin{lemma}
\label{lem:single-step-prob}
    The probability of successfully solving a puzzle of difficulty $D$ with a single hash step is given by
    \[
        \frac{\bigl \lfloor \frac{2^{32}-1}{D} \bigr \rfloor + 1}{2^{32}}.
    \]
\end{lemma}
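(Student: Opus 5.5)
Under the random-oracle definition, a single hash step yields an output $R$ that is uniformly distributed over the $2^{32}$ integers in $[0, 2^{32}-1]$. The success probability is therefore the number of admissible outputs divided by $2^{32}$, and the plan is simply to count the integers $R$ in this range that satisfy the puzzle condition for a fixed difficulty $D \in [1, 2^{32}-1]$.

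\textbf{Step 1: reduce the condition to a floor bound.} For a positive integer $D$, the condition $R \times D \leq 2^{32}$ is equivalent to $R \leq 2^{32}/D$, and since $R$ is an integer, to $R \leq \lfloor 2^{32}/D \rfloor$.

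\textbf{Step 2: reconcile this with the floor in the statement.} The stated formula uses $\lfloor (2^{32}-1)/D \rfloor$, so I must check when $\lfloor 2^{32}/D \rfloor$ and $\lfloor (2^{32}-1)/D \rfloor$ agree. They differ exactly when $D$ divides $2^{32}$, i.e., when $D$ is a power of two. In that case, under the literal condition $R D \le 2^{32}$, the boundary value $R = 2^{32}/D$ would also be accepted. I expect this boundary case to be the only real obstacle. I would resolve it by reading the puzzle condition as the strict inequality $R \times D < 2^{32}$, equivalently $R D \le 2^{32}-1$. This matches the Tor implementation, where the check is that the 32-bit product does not overflow. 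Under that reading the condition is exactly $R \le \lfloor (2^{32}-1)/D \rfloor$. I would note this explicitly as the interpretation under which the lemma holds. For $D$ that are not powers of two, the two readings coincide anyway.

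\textbf{Step 3: count and finish.} The admissible outputs are $R = 0, 1, \dots, \lfloor (2^{32}-1)/D \rfloor$. This upper endpoint lies in $[0, 2^{32}-1]$ because $D \ge 1$, so the count is $\lfloor (2^{32}-1)/D \rfloor + 1$. Dividing by the total number of equally likely outputs, $2^{32}$, gives the claimed probability.
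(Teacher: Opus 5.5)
Your proof is correct and uses the same counting argument as the paper: $R$ is uniform over $2^{32}$ values, and the admissible ones are $0,1,\dots,\lfloor (2^{32}-1)/D\rfloor$. Your Step~2 is more careful than the paper, which simply asserts that $R\times D \le 2^{32}$ implies $R \le \lfloor (2^{32}-1)/D\rfloor$; that implication actually fails for $D = 2,4,\dots,2^{31}$ (for $D=1$ the extra value $2^{32}$ lies outside the output range), so reading the check as $R\times D \le 2^{32}-1$, i.e., Tor's no-overflow test, is exactly what makes the stated formula hold.
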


\begin{proof}
    The range of possible hash outputs is $[0, 2^{32}-1]$.
    A solution space satisfies $R \times D \leq 2^{32}$, which implies $R \leq \lfloor \frac{2^{32}-1}{D} \rfloor$.
    Thus, the number of valid $R$ values is $\lfloor \frac{2^{32}-1}{D} \rfloor + 1$.
    Since each $R$ is equally likely, the probability of success in a step is the ratio of valid outputs to total possible outputs.
\end{proof}

\begin{lemma}
\label{lem:single-step-prob-fail}
    The probability of failing to solve a puzzle of difficulty $D$ within 
    \[
        \Biggl \lceil \frac{-2}{\log{\bigl(1-{ \frac{{\lfloor \frac{2^{32}-1}{D}} \rfloor + 1}{2^{32}}\bigr)}}} \Biggr \rceil
    \]
    hash steps is less than or equal to $10^{-2}$.
\end{lemma}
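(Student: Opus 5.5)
Write $p = \bigl(\lfloor \frac{2^{32}-1}{D}\rfloor + 1\bigr)/2^{32}$ for the single-step success probability from Lemma~\ref{lem:single-step-prob}. Since the hash is modeled as a random oracle and each attempt uses a fresh nonce, successive hash steps are independent Bernoulli trials with success probability $p$. Hence the probability of failing all of the first $n$ steps is exactly $(1-p)^n$. The goal is therefore to show $(1-p)^n \le 10^{-2}$ for $n = \lceil -2/\log(1-p) \rceil$, where $\log$ is read as the base-$10$ logarithm (the constant $-2$ equals $\log_{10} 10^{-2}$).

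First I dispose of edge cases. Because $D \ge 1$, we have $p>0$. If $p = 1$, which happens when $D=1$ since then the numerator is $2^{32}$, any single step succeeds. The expression then degenerates with $\log 0 = -\infty$, giving $n = 0$ or requiring a convention. I would treat this case separately by noting that the failure probability after one step is $0 \le 10^{-2}$, or by stipulating the convention $n \ge 1$. In every other case $0<p<1$, so $\log(1-p) < 0$ and $n$ is a well-defined positive integer.

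The main step is a monotonicity argument. From $n \ge -2/\log(1-p)$ and $\log(1-p)<0$, multiplying both sides by the negative number $\log(1-p)$ flips the inequality, giving $n\log(1-p) \le -2$. That is, $\log\bigl((1-p)^n\bigr) \le -2$. Since $\log_{10}$ is increasing, this yields $(1-p)^n \le 10^{-2}$, as claimed.

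The only real obstacle is interpretive rather than technical: fixing the base of the logarithm and the independence model. If $\log$ were natural, the same argument would give a bound of $e^{-2}\approx 0.135$, and the statement would fail. So I would state explicitly that $\log$ denotes $\log_{10}$, and justify independence of the trials from the random-oracle definition together with distinct inputs per step.
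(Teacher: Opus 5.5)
Your proof is correct and follows the paper's argument: you treat the $n$ hash steps as independent Bernoulli trials with success probability $p$ from Lemma~\ref{lem:single-step-prob}, reduce the claim to $(1-p)^n \le 10^{-2}$, rearrange via $n\log(1-p)\le -2$ using $\log(1-p)<0$ (base $10$), and handle $D=1$ (where $p=1$) separately. You are slightly more careful than the paper in two places: you fix the logarithm base explicitly, and you note that at $D=1$ the formula degenerates to $n=0$, so a convention $n\ge 1$ is needed, whereas the paper simply asserts $P(X=0)=0$ in that case.
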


\begin{proof}
    Let $X$ be the random variable representing the number of puzzles being solved in $n$ hash steps.
    Since each step is an independent Bernoulli trial with success probability
    \[
        p = \frac{\bigl \lfloor \frac{2^{32}-1}{D} \bigr \rfloor + 1}{2^{32}},
    \]
    $X$ follows a binomial distribution
    \[
        P(X=k) = \binom{n}{k} p^{k} (1-p)^{n-k}, \quad \text{for } k \in \mathbb{Z}^+_0.
    \]

    To ensure that the probability of no successes (i.e., $P(X=0)$) is at most $10^{-2}$, we require 
    \[
        (1-p)^n \leq 10^{-2}.
    \]

    Taking logarithms on both sides,
    \[
        n \log{(1-p)} \leq -2,
    \]
    \[
        n \geq \frac{-2}{\log{(1-p)}}.
    \]

    When $D\neq1$, since $n$ must be an integer, taking the ceiling,
    \[
        n \geq \Biggl \lceil \frac{-2}{\log{\bigl(1-{ \frac{{\lfloor \frac{2^{32}-1}{D}} \rfloor + 1}{2^{32}}\bigr)}}} \Biggr \rceil.
    \]

    For the special case where $D=1$, we have $p=1$, so $P(X=0)=0$, which trivially satisfies the condition $P(X=0) \leq 10^{-2}$.
\end{proof}

Now, we estimate the additional delay incurred due to the inaccurate difficulty using the previously defined concepts and results.
Rather than directly calculating the difference in the number of hash steps required to solve puzzles at the accurate and inaccurate difficulty levels, we approximate this by considering the number of steps required to solve a separate puzzle whose difficulty equals the difference between two levels.
We manually verify that the maximum error introduced by this approximation is at most three hash steps when the inaccuracy in the difficulty level exceeds 1.
In the case where the inaccuracy is exactly 1, the error can be up to five hash steps.

\begin{theorem}
\label{theom:single-step}
    Let the inaccurate difficulty exceed the true difficulty level by $\delta$, where $\delta > 1$.
    Then, the expected additional delay $T_{avg}$ incurred due to this inaccuracy is $\frac{\delta}{\gamma}$ seconds.
    Furthermore, with probability at most 0.01, the worst-case additional delay $T_{worst}$ can be as large as 
    \[
        \frac{\biggl \lceil \frac{-2}{\log{\bigl(1-{ \frac{{\lfloor \frac{2^{32}-1}{\delta}} \rfloor + 1}{2^{32}}\bigr)}}} \biggr \rceil}{\gamma}.
    \]
\end{theorem}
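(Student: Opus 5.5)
Following the approximation stated just before the theorem, I will replace the true extra work (steps to solve difficulty $D+\delta$ minus steps to solve $D$) by the number of hash steps needed to solve a single auxiliary puzzle of difficulty $\delta$. The text preceding the theorem asserts, and says was manually verified, that this substitution changes the count by at most three hash steps when $\delta>1$. Under that modeling assumption, the theorem becomes a statement about one puzzle of difficulty $\delta$, solved by independent random-oracle hash steps at rate $\gamma$.

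\textbf{Average case.} By Lemma~\ref{lem:single-step-prob}, each hash step succeeds independently with probability
\[
p_\delta=\frac{\bigl\lfloor \frac{2^{32}-1}{\delta}\bigr\rfloor+1}{2^{32}}.
\]
The number $N$ of steps until the first success is therefore geometric, so $\mathbb{E}[N]=1/p_\delta$. I will then show $1/p_\delta\approx\delta$. Write $2^{32}-1=q\delta+r$ with $0\le r<\delta$. Then $p_\delta=(q+1)/2^{32}$, and $(q+1)\delta\in[2^{32}-\delta+1+\delta-r,\;2^{32}+\delta]$, which gives $1/p_\delta=\delta\cdot 2^{32}/((q+1)\delta)=\delta(1+O(\delta/2^{32}))$. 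For the difficulty ranges considered (at most 30), this equals $\delta$ up to a negligible relative error. Multiplying $\mathbb{E}[N]$ by the time per step, $1/\gamma$, gives $T_{avg}=\delta/\gamma$.

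\textbf{Worst case.} I will apply Lemma~\ref{lem:single-step-prob-fail} with $D=\delta$. It says that the probability of failing to solve within
\[
n_\delta=\Biggl\lceil\frac{-2}{\log\bigl(1-p_\delta\bigr)}\Biggr\rceil
\]
steps is at most $10^{-2}$. The hypothesis $\delta>1$ excludes the degenerate case $p=1$, so the logarithm is finite. Hence, with probability at least $0.99$, the additional work is at most $n_\delta$ steps, and the delay exceeds $n_\delta/\gamma$ with probability at most $0.01$. This is the claimed $T_{worst}$. Here the log is base 10, consistent with the $-2$ in that lemma.

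\textbf{Main obstacle.} The delicate step is justifying the average-case equality exactly rather than approximately. Both $1/p_\delta=\delta$ and the substitution of the difference puzzle hold only up to small errors: at most three steps, and a relative error of order $\delta/2^{32}$. I would therefore state explicitly that the theorem holds within the modeling approximation adopted before it, and bound the residual by $3/\gamma+O(\delta^2/(2^{32}\gamma))$ seconds. This is negligible at $\gamma\approx1.5\times10^9$.
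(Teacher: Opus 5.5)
Your proposal is correct and is essentially the paper's argument (geometric expectation $\approx$ difficulty, Lemma~\ref{lem:single-step-prob-fail} applied at difficulty $\delta$, then division by $\gamma$); the one minor difference is that the paper gets the average case directly as $(D+\delta)-D=\delta$, so that part never needs the auxiliary-puzzle substitution or its three-step error. One slip in your closing remark: $\gamma$ is the puzzle-hash rate, which the paper takes to be about $100$ steps per second on the $1.5$~GHz device, not the clock rate $1.5\times10^{9}$, so a three-step residual is about $30$~ms and not negligible next to $\delta/\gamma$ for small $\delta$, although this does not affect the theorem since it is stated symbolically in $\gamma$.
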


\begin{proof}
    By Lemma~\ref{lem:single-step-prob}, the expected number of hash steps required to solve a puzzle of difficulty $D$ is approximately $D$.\footnote{This holds when $D$ is not too large, as in Tor where $D$ is practically capped at 10,000.}
    Therefore, if the difficulty is overestimated by $\delta$, the expected number of additional hash steps is $\delta$.

    From Lemma~\ref{lem:single-step-prob-fail}, to ensure that the probability of failing to solve a puzzle of difficulty $\delta$ within $n$ hash steps is at most 0.01, the number of steps must satisfy
    \[
        n \geq \Biggl \lceil \frac{-2}{\log{\bigl(1-{ \frac{{\lfloor \frac{2^{32}-1}{\delta}} \rfloor + 1}{2^{32}}\bigr)}}} \Biggr \rceil.
    \]
    This gives the worst-case number of hash steps needed with 99\% confidence.

    Since the client performs $\gamma$ hash steps per second, the expected delay is given by
    \[
        T_{avg} = \frac{\delta}{\gamma},
    \]
    and the worst-case delay (with 1\% probability of being exceeded) is bounded by
    \[
        T_{worst} = \frac{\biggl \lceil \frac{-2}{\log{\bigl(1-{ \frac{{\lfloor \frac{2^{32}-1}{\delta}} \rfloor + 1}{2^{32}}\bigr)}}} \biggr \rceil}{\gamma}.
    \]
\end{proof}

Finally, given that the assumed low-performance device can perform at least 100 hash operations per second, the additional delay can be expressed as $\frac{\delta}{100}$ seconds on average.
In the worst-case scenario, with a probability of at most 1\%, the delay can be as high as
\[
    \frac{\biggl \lceil \frac{-2}{\log{\bigl(1-{ \frac{{\lfloor \frac{2^{32}-1}{\delta}} \rfloor + 1}{2^{32}}\bigr)}}} \biggr \rceil}{100}.
\]

Note that modern laptops or desktops can perform hash operations much faster; e.g., Intel\textsuperscript{\textregistered} Xeon\textsuperscript{\textregistered} Silver 4214R Processor can achieve over 250 hash operations per second, while Intel\textsuperscript{\textregistered} Core\textsuperscript{\textregistered} i7-1260P Processor can reach approximately 300 operations per second.

\end{document}